\documentclass[letterpaper, cleveref, autoref, thm-restate]{article}

\bibliographystyle{alpha}

\title{Improved Distributed Algorithms for Random Colorings}
\author{Charlie Carlson\thanks{Department of Computer Science, University of California, Santa Barbara. Email: \{charlieannecarlson,vigoda\}@ucsb.edu.  Research supported in part by NSF grant CCF-2147094.} \and Daniel Frishberg\thanks{Department of Computer Science and Software Engineering, California Polytechnic State University, San Luis Obispo. Email: dfrishbe@calpoly.edu} \and Eric Vigoda$^{*}$}

\usepackage[ruled, linesnumbered]{algorithm2e}
\usepackage[margin=1in]{geometry}
\usepackage{amsmath,amsfonts,amssymb,amsthm}
\usepackage{graphicx}
\usepackage{enumerate}
\usepackage{bbm}
\usepackage{verbatim,nameref}
\usepackage{hyperref,color}
\usepackage[capitalize,nameinlink]{cleveref}
\usepackage[dvipsnames]{xcolor}
\hypersetup{
	colorlinks=true,
	pdfpagemode=UseNone,
    citecolor=OliveGreen,
    linkcolor=NavyBlue,
    urlcolor=Magenta,
	pdfstartview=FitW
}
\usepackage{appendix}
\crefname{appsec}{Appendix}{Appendices}
\usepackage{tikz}

\theoremstyle{plain}
\newtheorem{theorem}{Theorem}[section]
\newtheorem{lemma}[theorem]{Lemma}

\theoremstyle{definition}
\newtheorem{definition}[theorem]{Definition}
\newtheorem{observation}[theorem]{Observation}

\newtheorem*{assumption*}{Assumption}

\theoremstyle{remark}
\newtheorem{remark}[theorem]{Remark}

\crefname{lemma}{Lemma}{Lemmas}
\crefname{theorem}{Theorem}{Theorems}
\crefname{definition}{Definition}{Definitions}
\crefname{fact}{Fact}{Facts}
\crefname{claim}{Claim}{Claims}
\crefname{proposition}{Proposition}{Propositions}

\usepackage{color}

\newcommand{\poly}{\mathrm{poly}}
\newcommand{\dist}{\mathrm{dist}}
\newcommand{\distflip}{\mathrm{dist}}

\newcommand{\pres}{\operatorname{pres}}

\newcommand{\eps}{\varepsilon}
\newcommand{\indicator}{\mathbf{1}}

\newcommand{\R}{\mathbb{R}}

\newcommand{\EE}{\mathcal{E}}

\renewcommand{\SS}{\mathcal{S}}

\newcommand{\MCflip}{\mathcal{MC}_{\mathrm{flip}}}
\newcommand{\Tmix}{T_{\mathrm{mix}}}
\newcommand{\Tcouple}{T_{\mathrm{couple}}}

\newcommand{\MC}{\mathcal{MC}}
\newcommand{\wHamming}{\mathcal{H}}

\def\Prob#1{{\mathbf{Pr}\left({#1}\right)}}
\def\Exp#1{{\mathbf{E}\left({#1}\right)}}

\def\ProbCond#1#2{{\mathbf{Pr}\left({#1} \mid {#2} \right)}}

\def\ExpCond#1#2{{\mathbf{E}\left({#1} \mid {#2} \right)}}

\begin{document}
\maketitle

\abstract{

We study distributed versions of Markov Chain Monte Carlo (MCMC) algorithms for generating random $k$-colorings of an input graph with maximum degree $\Delta$.
In the sequential setting, the Glauber dynamics is the simple MCMC algorithm which updates the color at a randomly chosen vertex in each step.  Fischer and Ghaffari (2018), and independently Feng, Hayes, and Yin (2018), presented a parallel and distributed version of the Glauber dynamics which converges in $O(\log{n})$ rounds for $k>(2+\eps)\Delta$ for any $\eps>0$. 
We present the distributed flip dynamics and prove $O(n\log{n})$ mixing for $k>(11/6-\delta)\Delta$ for a fixed $\delta>0$.  Our new Markov chain is a generalization of the distributed Glauber dynamics previously analyzed, and is a parallel and distributed version of the more general flip dynamics considered in the sequential setting which recolors local maximal two-colored components in each step.   While the distributed Glauber dynamics and the sequential flip dynamics are symmetric Markov chains, and hence their stationary distribution is uniformly distributed over colorings, our distributed flip dynamics is not symmetric and hence the stationary distribution is unclear.

}

\maketitle
\section{Introduction}

This paper presents parallel and distributed algorithms for sampling from high-dimensional distributions.  An important application is sampling from the equilibrium distribution of a graphical model.  The equilibrium distribution is often known as the Gibbs or Boltzmann distribution, and efficient sampling from the Gibbs/Boltzmann distribution is a key step for Bayesian inference~\cite{KF:book,Murphy}. 

Our focus is algorithms in the LOCAL model for the $k$-colorings problem. The $k$-colorings problem is a graphical model of particular combinatorial interest and has played an important role in the development of algorithmic sampling techniques with provable guarantees. The LOCAL model is a standard model of distributed computation due to Linial~\cite{linial-local}. 

In the LOCAL model, the input to a problem is generally a graph~$G = (V, E)$. Each vertex is identified with a processor and is assigned a unique identifier. In each \emph{round} of an algorithm, each vertex is allowed to send an unbounded amount of information (a \emph{message}) to each of its neighbors, and may perform an unbounded amount of computation locally.

For an input graph $G=(V,E)$ and integer $k\geq 2$, let $\Omega$ denote the proper (vertex) $k$-colorings of $G$, namely $\Omega=\{\sigma:V\rightarrow \{1,\dots,k\}: \mbox{ for all } (v,w)\in E, \sigma(v)\neq\sigma(w)\}$ is the collection of assignments of $k$ colors to the vertices so that neighboring vertices receive different colors.  The associated Gibbs distribution $\mu$ is the uniform distribution over $\Omega$, the space of proper $k$-colorings.  

Under mild conditions on $G$ (e.g., triangle-free~\cite{BBCK}), the number of $k$-colorings is exponentially large, i.e., $|\Omega|=\exp(\Omega(n))$.  Nevertheless, our goal is to sample from $\mu$, the uniform distribution over this exponentially large set, in time $\poly(n)$, and ideally in time $O(n\log{n})$.  Furthermore,  in the distributed setting our goal is to generate samples ideally in time $O(\log{n})$. 
Unfortunately, the algorithm we present samples from a distribution which is not the uniform distribution.

A common technique for sampling from the Gibbs distribution in a wide range of scientific fields is the {\em Markov Chain Monte Carlo (MCMC)} method.  The simplest example of an MCMC algorithm is the Glauber dynamics, also known as the Gibbs sampler.  

Consider an input graph $G=(V,E)$ with maximum degree $\Delta$, and $k\geq\Delta+2$.  The Glauber dynamics updates the color of a randomly chosen vertex in each step.  In particular, from a coloring $X_t\in\Omega$ at time $t$, the transitions $X_t\rightarrow X_{t+1}$ of the Glauber dynamics work as follows.  We choose a random vertex~$v$ uniformly at random from~$V$, and a color~$c$ uniformly at random from the set of colors $\{1,\dots,k\}$.   If no neighbor of~$v$ has color $c$ in the current coloring $X_t$, i.e., $c\not\in X_t(N(v))$ where $N(v)$ are the neighbors of vertex $v$, then we recolor $v$ as $X_{t+1}(v)=c$ and otherwise we set $X_{t+1}(v)=X_t(v)$.  For all other vertices $w\neq v$ we set $X_{t+1}(w)=X_t(w)$.  This corresponds to the Metropolis version of the Glauber dynamics.  Alternatively one can choose the color $c$ uniformly from $\{1,\dots,k\}\setminus X_t(N(v))$, which is the set of colors that do not appear in the neighborhood of $v$ in $X_t$; this is the heat-bath version of the Glauber dynamics.

When $k\geq\Delta+2$ then the Glauber dynamics is ergodic and hence there is a unique stationary distribution.  Moreover, since the transitions of the Glauber dynamics are symmetric then the stationary distribution is uniform over $\Omega$; our new chain is not necessarily symmetric. 
The {\em mixing time} is the number of steps, from the worst initial state $X_0$, so that the chain is within total variation distance $\leq 1/4$ of the stationary distribution of the Markov chain, see~\cref{sec:MC} for a more formal definition.

There are various attempts at running asynchronous versions of the Glauber dynamics in the distributed setting, namely HOGWILD!~\cite{Smola,ZR}, but there are few theoretical results and the resulting process is not guaranteed to have the correct asymptotic distribution~\cite{DOR16,DDJ18,TSD20}.
There is also considerable work in constructing distributed sampling algorithms, including distributed versions of the Glauber dynamics~\cite{FG18,FY18,JLY18,FSY17,FHY18,LY22}; we discuss below the relevant results in our setting of the colorings problem.  An important caveat about previous results is that they require a strong form of decay of correlations, such as the Dobrushin uniqueness condition, and our results hold in regions where Dobrushin's uniqueness condition does not hold.

In the sequential setting, a seminal work of Jerrum~\cite{Jerrum} proved $O(n\log{n})$ mixing time of the Glauber dynamics whenever $k>2\Delta$ where $\Delta$ is the maximum degree.  Vigoda~\cite{Vigoda} presented an alternative dynamics which we will refer to as the flip dynamics and proved $O(n\log{n})$ mixing time of the flip dynamics when $k>\frac{11}{6}\Delta$.   The flip dynamics is a generalization of the Glauber dynamics which ``flips'' maximal 2-colored components (clusters) in each step by interchanging the pair of colors on the chosen cluster; Vigoda's analysis chooses particular flip probabilities which depend on the size of the chosen cluster and do not flip any cluster larger than size six.  

Vigoda's result was recently improved to $k>(\frac{11}{6}-\eps_0)\Delta$
for some fixed $\eps_0\approx 10^{-5}$ by Chen, Delcourt, Moitra, Perarnau, and Postle~\cite{CDMPP19}.  This later result of $k>(\frac{11}{6}-\eps_0)\Delta$ is the best known result for general graphs.  There are various improvements (e.g., \cite{DFHV,CLMM23}), however they all require particular girth or maximum degree assumptions; the girth is the length of the shortest cycle.

In the distributed setting,
Feng, Sun and Yin~\cite{FSY17} achieved $O(\Delta\log{n})$ rounds in LOCAL model when $k>(2+\eps)\Delta$ and $O(\log{n})$ rounds when $k>(2+\sqrt{2})\Delta$.
Fischer and Ghaffari~\cite{FG18}, and independently, Feng, Hayes and Yin~\cite{FHY18}, presented a distributed version of the Glauber dynamics which converges in $O(\log{n})$ rounds for $k$-colorings on any graph of maximum degree $\Delta$ when $k>(2+\eps)\Delta$ for any $\eps>0$.  These results match Jerrum's result (in the sequential setting) for general graphs.  

We present in \cref{sec:MC} a distributed version of the flip dynamics which is a generalization of the distributed Glauber dynamics.  We match the mixing time results in the sequential setting to obtain convergence in  $O(\log{n})$ round when $k>(11/6-\eps^*)\Delta$ for some fixed $\eps^*>0$.  However, the transitions of the Markov chain we present are not symmetric and hence the stationary distribution is not the uniform distribution.

  Our proof of fast convergence of our new distributed flip dynamics utilizes the path coupling framework of Bubley and Dyer~\cite{BubleyDyer}, which is an important tool in the analysis of the mixing time for sequential Markov chains. 
  In a coupling analysis path coupling allows one to only consider ``neighboring pairs''.  In the special case of the Glauber dynamics, path coupling is related to Dobrushin's uniqueness condition but path coupling is a weaker condition (namely, Dobrushin's uniqueness condition implies path coupling). We believe our work raises the following intriguing open question.  For any spin system, or equivalently any undirected graphical model, does the path coupling condition for a local (sequential) Markov chain imply the existence of an efficient distributed algorithm which converges in $O(\log{n})$ steps?

\subsection{Motivation}

Designing a distributed algorithm for constructing a coloring is a seminal problem in the study of distributed algorithms~\cite{linial-local,NS93}.
It is an important problem in the study of symmetry breaking and is useful in the design of networking algorithms~\cite{BEPS2016, SW2010, Kuhn09,linial-local}.
One of the fundamental problems in this context that has received significant attention is minimizing the number of rounds required to construct a $(\Delta+1)$-coloring in the LOCAL model; see Barenboim, Elkin, and Goldenberg~\cite{BEG} for a recent breakthrough, and see~\cite{FYZ23,FF23} for more recent follow-up works.

Graphical models are a fundamental tool in machine learning~\cite{Murphy}, and the associated sampling problem is important for associated learning, inference, and testing problems. A noteworthy example in the history of graphical models and in the importance of the associated sampling problem is the work on Restricted Boltzmann Machines (RBMs) of Hinton~\cite{Hinton}.  An RBM is an instance of the Ising model on a bipartite graph.  The Ising model is a simpler variant of the random colorings problem in which we are sampling labellings of the vertices of a bipartite graph with only 2 colors where the labellings are weighted exponentially by the number of monochromatic edges; the generalization to $k>2$ colors is the Potts model, and the zero-temperature (antiferromagnetic) Potts model is the random colorings problem that we study.   The design of fast learning algorithms for RBMs was fundamental in the development of deep learning algorithms~\cite{Deep5,Deep4,Deep3,Deep1,Deep2}.  

Given the proliferation of machine learning tasks on high-dimensional data, there is a clear need for distributed sampling algorithms for graphical models.  For example, speeding up inference in \emph{latent Dirichlet allocation} models via parallel and distributed Gibbs sampling~\cite{ldagpu, gibbsparallel} and via the stochastic gradient sampler~\cite{stochasticlda} has received attention in the machine learning community, as has the distributed problem of finding a $k$-coloring as a subroutine for Gibbs sampling~\cite{chromatic-gibbs}.

Sampling colorings is a natural combinatorial problem to address particularly because of its importance in the study of sequential sampling algorithms.  Jerrum's sampling algorithm~\cite{Jerrum} for $k>2\Delta$ colors was a seminal work as it pioneered the use of the coupling method for sampling problems on graphical models.  As mentioned earlier, Vigoda~\cite{Vigoda} improved Jerrum's result to $k>11\Delta/6$ and this was the state of the art until the recent improvement to $k>(11/6 -\eps)\Delta$~\cite{CDMPP19}.  One of the major open problems in the area of sequential sampling is to obtain an efficient sampling scheme when $k>\Delta+1$, see~\cite{CLMM23} for the most recent progress.

Our general question is whether efficient sequential sampling schemes yield efficient distributed sampling algorithms, by which we mean an $O(\log{n})$ round algorithm in the LOCAL model.  A distributed version of the Metropolis version of the Glauber dynamics for colorings was introduced in~\cite{FG18,FHY18} and was proved to be an efficient distributed sampling scheme when $k>(2+\eps)\Delta$ for all $\eps>0$.  
Our work goes beyond the single-site Glauber dynamics to designing efficient distributed sampling schemes for more general dynamics.
 
\subsection{Technical Contribution}
\label{sec:technical-contribution}

 Recall that the Glauber dynamics updates a single vertex in each step. Several recent works present and analyze distributed versions of the Glauber dynamics (specifically, the Metropolis version) in various contexts~\cite{FG18,FHY18,LY22,FSY17}.  For more general MCMC algorithms which update larger regions than a vertex in each step, do efficient convergence results in the sequential setting for such Markov chains yield efficient distributed sampling algorithms?

A prime example to consider for this more general question is Vigoda's flip dynamics~\cite{Vigoda}. Attaining a distributed version of the flip dynamics is more challenging as we need to simultaneously recolor clusters of up to 6 vertices; here a cluster refers to a maximal 2-colored component and the recoloring acts by interchanging the respective pair of colors on each cluster. Our first contribution is presenting a distributed version of Vigoda's flip dynamics.  The challenge is to make a distributed version which is efficient but simple enough that we can still analyze it.  

To parallelize the cluster recolorings, we need to ensure that no two overlapping clusters are simultaneously active, and that no two neighboring clusters that share colors are both active.  On the other hand, we need to ``activate'' each cluster for potential recoloring with a sufficiently large probability to obtain a mixing time that is independent of the maximum degree, namely $O(\log{n})$.

Our analysis of our distributed version of Vigoda's flip dynamics follows the high-level coupling presented in Vigoda's original work~\cite{Vigoda}.  A coupling analysis of a Markov chain, considers two copies of the Markov chain (in this case the distributed flip dynamics), each with arbitrary starting states.  Our aim is that there are ``coupled transitions'' for the two chains so that after $O(\log{n})$ steps the two chains have coalesced in the same state with sufficiently large probability; by coupled transition we mean that the two chains can couple their transitions as long as when viewed in isolation, each is a faithful copy of the original Markov chain.  The idea is that if we consider one of the chains to be in the stationary distribution, then we showed that after $O(\log{n})$ steps our algorithm has likely reached the stationary distribution and hence the mixing time is $O(\log{n})$.

There are several important technical challenges that arise when doing a coupling analysis in the distributed setting for the flip dynamics.  First, we need to ensure that the clusters we flip (which means swap the pair of colors in a maximal 2-colored component) do not interfere with any other clusters we might flip by either overlapping, or by neighboring and containing a common color.  Subsequently when we do try to couple a pair of flips in the two coupled chains, we need to consider the case that one of these two clusters is not flippable in only one chain due to one of these aforementioned conflicts (such as an overlapping cluster in only one of the chains). 

Finally, we use the path coupling framework~\cite{BubleyDyer} which allows us to restrict attention to the case that the pair of coupled chains only differ at a single vertex $v^*$; this was crucial in Vigoda's original analysis as well.  Vigoda's analysis only needed to consider clusters that include~$v^*$ or that neighbor this disagree vertex~$v^*$.  In our setting we also need to analyze and couple clusters that are distance 2 away from~$v^*$, where distance is measured by cluster flips; this is due to a neighboring cluster possibly being flippable in only one of the chains and then this effect reverbates out.  Moreover, those coupling for clusters at distance $<2$ from~$v^*$ is more complicated than the sequential setting as a coupled cluster might not be flippable in only one chain due to conflicts with other clusters.

\section{Distributed Flip Dynamics}
\label{sec:dist-defn}

Here we introduce the necessary graph theory and Markov chain definitions and then present the distributed version of the flip dynamics which we analyze in \cref{sec:coupling}.

\subsection{Preliminaries}\label{sec:prelim}

Let $[k]=\{1,\dots,k\}$.  
For a graph $G=(V,E)$ let $i\sim j$ denote $(i,j)\in E$, and for $v\in V$, let $N(v)=\{w\in V: v\sim w\}$ denote the neighbors of a vertex $v$.  For integer $k\geq 2$, let $\Omega^* = [k]^V$ denote the set of $k$-\emph{labellings} and $\Omega = \{\sigma\in [k]^V: \mbox{for all } i\sim j, \sigma(i)\neq\sigma(j)\}$ denote the set of $k$-\emph{colorings} of $G$.  Throughout this paper, a coloring (or $k$-coloring) refers to a proper vertex $k$-coloring.

\subsection{Clusters}

For a coloring $\sigma$, a cluster $S$ in $\sigma$ is a maximal 2-colored component of {\em size at most 6}; this is formally defined in the following definition.

\begin{definition}
Let $G=(V,E)$ be a graph and $\sigma\in\Omega^*$.  
For a vertex $v\in V$ and color $c\in [k]$, let $S_{\sigma}(v,c)$ denote the set of vertices reachable from $v$ by a $(\sigma(v),c)$ alternating path (recall, a $(c',c)$ alternating path is a sequence of vertices $v_1,v_2, \ldots, v_\ell \in V$ for some $\ell \geq 1$ such that $\sigma(v_i) \in \{c',c\}$ for all $1 \leq i \leq \ell$, and $(v_i, v_{i+1}) \in E$ and $\sigma(v_i) \neq \sigma(v_{i+1})$ for all $1 \leq i \leq \ell -1$).  Note, that even though $\sigma$ might be an improper coloring, the definition of an alternating path still requires the colors of adjacent vertices on the path to be different in $\sigma$.

When $|S_{\sigma}(v,c)|\leq 6$ then we refer to $S=S_{\sigma}(v,c)$ as a {\em cluster}.
Let \[
\SS_{\sigma}=\bigcup_{v\in V,c\in [k]} \{S_{\sigma}(v,c):|S_{\sigma}(v,c)|\leq 6\},
\]
denote the collection of all clusters in $\sigma$ of size at most 6, where the size of a cluster refers to the number of vertices in the cluster.  The restriction to size at most 6 is due to the Markov chain used as in previous works~\cite{Vigoda,CDMPP19}.
\end{definition}

For a labelling $\sigma\in\Omega^*$, vertex $v\in V$, and color $c\in [k]$, the {\em flip} of cluster $S_{\sigma}(v,c)$ interchanges colors $\sigma(v)$ and $c$ on the set $S_{\sigma}(v,c)$.   Let $\sigma'$ denote the resulting coloring after this flip of cluster $S_{\sigma}(v,c)$.  Notice that if $\sigma\in\Omega$ then $\sigma'\in\Omega$, i.e., if it is a proper coloring before the flip, then after the flip it remains a proper coloring since the clusters are maximal 2-colored components.  This is the key property for the flip dynamics as once we reach a proper coloring then we are guaranteed to stay at proper colorings.
Nevertheless, the definition of a cluster $S_\sigma(v,c)$ is also defined for improper colorings $\sigma\in\Omega^*\setminus\Omega$; this enlarged state space of improper colorings is used in the proof but not in the algorithm itself, see~\cref{sec:proof-weaker} for further discussion of this technicality.

Consider a coloring $\sigma\in\Omega$ and a vertex $v\in V$.  For every color $c$ which does not appear in the neighborhood of $v$, i.e., $c\not\in \sigma(N(v))$ then the corresponding cluster is of size 1, i.e., $|S_\sigma(v,c)|=1$ since $S_\sigma(v,c)=\{v\}$.  Flips of these singleton clusters are exactly the transitions of the Glauber dynamics.  
The flip dynamics of Vigoda~\cite{Vigoda} is a generalization of the Glauber dynamics in which clusters of size at most 6 are flipped with positive probability (depending on the size of the cluster).
Note, for $c=\sigma(v)$ then we get a singleton cluster and the flip does not change the coloring, hence the flip dynamics has a non-zero self-loop probability and thus is aperiodic.

For clusters $S,T\in \SS_{\sigma}$, we say $S$ and $T$ are neighboring clusters, which we denote as $S\sim T$, if there exists $v\in S$ and $w\in T$ where $v\sim w$.

\subsection{Markov Chains}
\label{sec:MC}
Consider a Markov chain $(X_t)$ with state space $\Omega$ and transition matrix $P$ and unique stationary distribution~$\pi$.
We say that the chain is \emph{aperiodic} if $\gcd\{t \colon P^t(x, x) > 0\} = 1$ for all $x \in \Omega$ and \emph{irreducible} if for all $x, y \in \Omega$, there exists a $t$ such that $P^t(x,y) > 0$. 
Recall that if the chain is both aperiodic and irreducible, then it is \emph{ergodic}
and the chain has a unique \emph{stationary distribution} $\pi$ where:
$ 
     \text{for all } x, y \in \Omega, \lim_{t \rightarrow \infty} P^t(x, y) = \pi(y).
$

The {\em mixing time} is the number of steps, from the worst initial state $X_0$, until the chain is within total variation distance $\leq 1/4$ of the stationary distribution:
\[
    \Tmix := \max_{x \in \Omega} \min \{t  \mid \Vert P^{t}(\sigma, \cdot) - \mu \Vert_{\rm TV} \leq 1/4\},
\]
where $d_{TV}$ is the \emph{total variation distance},
$
     \Vert \mu - \omega \Vert_{\rm TV} := \frac{1}{2} \sum_{x \in \Omega} \lvert \mu(x) - \omega(x) \rvert.
$
The choice of constant $1/4$ is somewhat arbitrary since, for any $\varepsilon>0$, we can obtain total variation distance $\leq\varepsilon$ after $\leq\log(1/\varepsilon) \Tmix$ steps.

\subsection{Algorithm Description}
\label{sec:distributed}

We begin by defining a sequential process and then show that this process can be implemented efficiently in a distributed manner.

We have the following parameters in our algorithm.  
Let $\alpha:=\eps/(5000k)$ where $k\geq (11/6+\eps)\Delta$ for some $\eps>0$.
The parameter $\alpha$ will be used for the activation probability of a cluster.  In~\cref{sec:LP} when we strengthen the main result for $k<(11/6)\Delta$ then we will 
redefine $\alpha$ so that it depends on the distance of $k$ below 
$(11/6)\Delta$.

Let $1\geq f_i\geq 0$ for all $i\geq 1$ be a sequence of ``flip'' probabilities that contain the following key properties:
$f_1=1$, $f_i\geq f_{i+1}$ for all $i$, and $f_i=0$ for all $i\geq 7$.  The following process is well-defined for any choice of flip probabilities with these properties.  To prove~\cref{thm:main-weaker,thm:main} we will choose slightly different flip probabilities. In particular, to prove the slightly weaker result (\cref{thm:main-weaker}) in~\cref{sec:weaker} we will choose flip probabilities as in~\cite{Vigoda}, and then to get the refined result (\cref{thm:main}) in~\cref{sec:LP} we will use the setting in~\cite{CDMPP19}.

We now define the Markov chain $\MCflip$ with state space $\Omega$. 
For a coloring $X_t\in\Omega$, the transitions $X_t\rightarrow X_{t+1}$ of $\MCflip$ are defined as follows:

\begin{enumerate}
\item
Independently for each $S\in\SS_{\sigma}$, cluster $S$ is active with probability $\alpha$.
\item 
A cluster $S=S_{X_t}(v,c)$ is flippable if the following hold: 
\begin{enumerate}[(a)]
\item $S$ is active;
\item {\em Overlapping clusters:} There is no active $S'\neq S$ where $S\cap S'\neq\emptyset$;
\item {\em Conflicting neighboring clusters:} For all active clusters $T=T_{X_t}(w,c')$ where $S\sim T$,
$\{X_t(v),c\}\cap\{X_t(w),c'\}=\emptyset$.  \label{step:conflict_n}
\end{enumerate}
\item Independently for each flippable cluster $S$, flip $S$ with probability $f_i$ where $i=|S|$.
\item Let $X_{t+1}$ denote the resulting coloring.
\end{enumerate}

Notice that step~\ref{step:conflict_n} is saying that for a pair of active and neighboring clusters $S$ and $T$, the pair of colors defining cluster $S$ are disjoint from the pair of colors defining cluster $T$.

The Markov chain $\MCflip$ is ergodic and hence there is a unique stationary distribution.  However, the transitions of the Markov chain are not symmetric; an earlier version of this paper incorrectly claimed that it was symmetric.  Consequently, the stationary distribution of the Markov chain is a distribution over $\Omega$ which is not necessarily the uniform distribution.

\begin{remark}
\label{rem:error}
    Here is a simple example demonstrating that the transitions of the Markov chain $\MCflip$ are not symmetric (this error was pointed out to us by Tom Hayes).
    Let $G = (V,E)$, $V = \{v_1,v_2,v_3,v_4\}$, $E = \{\{v_1,v_2\},\{v_2,v_3\}, \{v_3,v_4\}, \{v_4,v_1\}\}$, and $\Omega$ be the set of all proper colorings of $G$ with colors $\{R,B,P,Y\}$. 
 Consider the following pair of colorings $\sigma$ and $\tau$:
    let $\sigma(v_1) = R$, $\sigma(v_2) = \sigma(v_4) = B$, and $\sigma(v_3) = P$, and let $\tau(v_1) = \tau(v_3) = Y$ and $\tau(v_2) = \tau(v_4) = B$. 
    Note, $\lvert \SS_\sigma \rvert = 12$ while $\lvert \SS_\tau \rvert = 13$, and observe that $\ProbCond{X_{t+1} = \tau}{X_{t} = \sigma} \neq\ProbCond{X_{t+1} = \sigma}{X_{t} = \tau} $; hence, the chain is not symmetric for this example.
\end{remark}

The Markov chain $\MCflip$ can be implemented efficiently in a distributed manner.
\begin{lemma}
\label{lem:LOCAL}
Each step of the Markov chain $\MCflip$ can be implemented in the LOCAL model in $O(1)$ rounds.
\end{lemma}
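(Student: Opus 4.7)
The plan is to exhibit a local simulation in which every vertex, armed with the transcript of its constant-radius neighborhood in $G$ together with the random bits of those neighbors, can locally compute its own coloring update in one step of $\MCflip$. Since the LOCAL model lets each vertex collect everything within radius $r$ in $r$ rounds, the lemma will follow once we check that a suitably large constant $r$ suffices.

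Concretely, I would proceed as follows. First, every vertex $v$ spends $r := 12$ rounds gathering the subgraph induced by its $r$-ball, the current colors and IDs of those vertices, and their random tapes. Every cluster $S_\sigma(v,c)$ under consideration has size at most $6$ and is connected, hence has diameter at most $5$; so any cluster containing $v$, any cluster overlapping such a cluster, and any cluster neighboring such a cluster is entirely contained in the $11$-ball around $v$. A constant-diameter BFS from $v$, restricted to $(\sigma(v),c)$-alternating paths and truncated at size $6$, lets $v$ enumerate every $S \in \SS_\sigma$ that contains $v$, and the same local view contains every other cluster that could be relevant to checking conditions (b) and (c) of flippability for those $S$.

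Second, to implement the independent activation and flip Bernoullis consistently across all vertices of a given cluster, I would designate the minimum-ID vertex of each $S$ as its \emph{owner}: the owner samples the activation coin with probability $\alpha$, and, conditional on $S$ being flippable, the flip coin with probability $f_{|S|}$, using random bits indexed by a canonical description of $S$ (for instance, the sorted list of its vertex IDs together with the pair of colors). Because the owner lies at distance at most $5$ inside $S$ and the random tape of every vertex in the $r$-ball was collected in the first phase, every vertex of $S$ reconstructs the same activation and flip outcomes; similarly, every vertex can evaluate flippability of $S$ from its view of the activations of clusters overlapping or neighboring $S$. Each $v$ then sets $X_{t+1}(v) = X_t(v)$ unless $v$ lies in a cluster that is actually flipped, in which case $v$ applies the color interchange.

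The only real obstacle is the diameter bookkeeping for step one: verifying that $r = O(1)$ is large enough to include every cluster whose activation or flip outcome could affect $v$'s color. This reduces to the elementary observation that two clusters which overlap or share an edge together span at most $6+1+6$ vertices and sit within graph distance $5+1+5 = 11$ of one another, so all the data needed to execute steps~1--3 of $\MCflip$ at $v$ is a deterministic function of $v$'s $r$-neighborhood, giving an $O(1)$-round LOCAL implementation.
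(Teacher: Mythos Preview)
Your proposal is correct and follows essentially the same approach as the paper: both arguments designate the minimum-ID vertex of each cluster as the ``president''/owner who samples the activation and flip coins, and both rely on the bounded diameter of size-$\le 6$ clusters to show that every vertex can determine its clusters, their activations, and their flippability from a constant-radius view. The only difference is presentational\textemdash you use the standard ``collect the $r$-ball and simulate locally'' formulation, whereas the paper spells out the equivalent explicit message-passing steps (six rounds to identify clusters, then constant-round phases for president election, activation, conflict detection, and flipping).
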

\begin{proof}
    We describe the steps of the algorithm and how to implement them in the LOCAL model. At a given time step~$t$, denote the current coloring as~$\sigma = X_t$.
    \begin{enumerate}
        \item \label{step:findcluster} For each vertex~$v\in V$ and for each color~$c \in [k]$, identify the cluster~$S_\sigma(v, c)$. We accomplish this step by (i) sending a message indicating the index of~$v$ to each neighboring vertex~$w$ with~$\sigma(w) = c$, (ii) passing this message, along with the index of~$w$, to each neighbor~$x$ of~$w$ with~$\sigma(x) = \sigma(v)$, and (iii) repeating this process for up to six rounds. After the six rounds, each vertex has received the identities of all other vertices in its six-hop neighborhood with which it might share a cluster, and thus can determine the clusters (and their sizes) to which it belongs.  Moreover, any 2-colored components of size $>6$ will be identified and discarded.
        \item \label{step:findpres} Fix an arbitrary ordering of the vertex set of $G$ and for each cluster~$S$, identify~$\pres(S)$ to be the lowest-index vertex~$v \in S$. 
        We can accomplish this step by letting each vertex~$u \in S$ compare its own index to each of the indices of other vertices in~$S$, which have been passed during step~\ref{step:findcluster}.
        \item For each cluster~$S$, activate~$S$ with probability~$\alpha$. More precisely, for each~$v\in V$ and for each~$c\in [k]$, if~$v = \pres(S_\sigma(v, c))$, activate~$S_\sigma(v, c)$ by sending a message to every~$u\in S_\sigma(v, c)$.
        \item \label{step:conflicts} Detect conflicts: 
        \begin{enumerate}[(a)]
        \item {\em Overlapping clusters:} For all~$v \in V$, if~$S_\sigma(v, c), S_\sigma(v, c')$ are both active for some~$c \neq c'$, send messages to~$\pres(S_\sigma(v, c)), \pres(S_\sigma(v, c'))$ to ``deactivate''~$S_\sigma(v, c), S_\sigma(v, c')$.
        \item {\em Conflicting neighboring clusters:} For all~$v\in V$, for every neighbor~$w$ of~$v$, if there exist clusters~$S_\sigma(v, c) \neq S_\sigma(w, c')$ such that $\{\sigma(v), c\} \cap \{\sigma(w), c'\} \neq \emptyset$ and if~$S_\sigma(v, c)$ and~$S_\sigma(w, c')$ are both active,
        deactivate~$S_\sigma(v, c)$ and~$S_\sigma(w, c)$ (by sending messages to~$\pres(S_\sigma(v, c))$ and~$\pres(S_\sigma(w, c))$).
        \end{enumerate}
        \item For all~$v\in V$, for all~$c\in[k]$, if~$S_\sigma(v, c)$ is still active and~$v = \pres(S_\sigma(v, c))$, flip~$S_\sigma(v, c)$ with probability~$f_i,$ where~$i = |S_\sigma(v, c)|$ (by sending a message to each~$w \in S_\sigma(v, c)$ to change its color from~$c$ to~$\sigma(v)$ or vice versa).
    \end{enumerate}
    Since, in step 5, only~$\pres(S)$ is responsible for flipping~$S$, the probability of a given cluster~$S$ being flipped, conditioned on~$S$ being active and having no active neighboring or overlapping cluster, is~$f_{|S|}$. 
    
    Each of the above steps requires a constant number of rounds, proving the claim. 
    Furthermore, the amount of computation performed locally at each vertex depends only (and polynomially) on the maximum degree of the graph and the number of colors. That is, not only is the number of rounds in the LOCAL model~$O(1)$, but also the algorithm is efficient with respect to the local computation performed in each round.
\end{proof}

We present the following mixing time result:

\begin{theorem}
\label{thm:main}
There exists $\eps^*>0$, for all $\Delta\geq 2$, for any $k>(11/6-\eps^*)\Delta$ and any graph of maximum degree $\Delta$, 
 the mixing time of $\MCflip$ is $O(\log{n})$ where $n=|V|$.
\end{theorem}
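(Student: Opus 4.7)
The plan is to apply the path coupling framework of Bubley and Dyer to a single round of $\MCflip$, working on the enlarged state space $\Omega^*$ so that Hamming distance is a path metric and it suffices to couple pairs at distance one. Concretely, I will exhibit a coupling $(X_1, Y_1)$ of one step of $\MCflip$ started from any $(\sigma, \tau) \in \Omega^* \times \Omega^*$ with $\dist(\sigma,\tau)=1$ satisfying
\[
\E\bigl[\dist(X_1, Y_1)\bigr] \leq 1 - \beta
\]
for some constant $\beta = \beta(\Delta, \eps^*)>0$. Since the diameter of $\Omega$ under Hamming distance is at most $n$, the path coupling estimate then gives $\Tmix = O(\log(n)/\beta) = O(\log n)$ rounds.

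Let $v^*$ be the unique disagreement vertex, with $\sigma(v^*)=a\ne b=\tau(v^*)$. The coupling is built clusterwise. Clusters whose vertex set is sufficiently far from $v^*$ are identical in $\sigma$ and $\tau$ and are coupled trivially (same activation coin, same flip coin), contributing zero. The nontrivial contributions to $\E[\dist(X_1,Y_1)]$ come from three kinds of clusters near $v^*$:
\begin{enumerate}
\item Clusters containing $v^*$: the pairs $S_\sigma(v^*,c)$ and $S_\tau(v^*,c)$ for $c \notin \{a,b\}$, and the critical pair $\bigl(S_\sigma(v^*,b),\, S_\tau(v^*,a)\bigr)$. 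Coupled flips of these clusters can eliminate the disagreement at $v^*$ and give the main negative contribution.
\item Clusters containing a neighbor $w$ of $v^*$ but not $v^*$ itself. The cluster structures $S_\sigma(w,c)$ and $S_\tau(w,c)$ can differ on their vertex sets when $c \in \{a,b\}$; coupling their flips may create new disagreements on their symmetric difference, giving a positive contribution.
\item Clusters at distance two from $v^*$. These enter because a cluster $T$ adjacent to $v^*$ may be flippable in one chain but deactivated in the other, when an overlapping or color-conflicting active cluster near $v^*$ exists only in one chain; flipping $T$ in exactly one chain then creates disagreements on $T$.
\end{enumerate}

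For contributions of types (1) and (2), the expected change in Hamming distance per round reduces, after summing over cluster configurations around $v^*$, to a linear combination of the flip probabilities $f_2,\ldots,f_6$ whose coefficients are determined by the local cluster-size distribution. This is precisely the structure exploited by Vigoda's LP: solving that LP with his original flip probabilities proves contraction for $k > (11/6)\Delta$, which gives the weaker \cref{thm:main-weaker}, and using the refined LP and flip probabilities of Chen--Delcourt--Moitra--Perarnau--Postle~\cite{CDMPP19} extends the argument to $k > (11/6 - \eps^*)\Delta$ for a fixed $\eps^* > 0$. The type~(3) contribution is bounded by observing that a desynchronization at distance two requires two independently active clusters (the $v^*$-involving one that causes the conflict, plus the adjacent cluster $T$), so each offending configuration contributes $O(\alpha^2)$; summing over the $\poly(k,\Delta)$ configurations near $v^*$ and using $\alpha = \eps/(5000k)$, the total type-(3) contribution is of strictly smaller order than the $\Theta(\alpha)$ type-(1) negative contribution and is absorbed into $\beta$.

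The main obstacle I expect is handling type~(3) rigorously. In the sequential flip dynamics only one cluster acts per step, so no ripple effect can occur; here we must also rule out longer cascades in which a desynchronization at distance one triggers a further desynchronization at distance two, and so on. The plan is to show that each additional hop of desynchronization requires another independent activation event, introducing another factor of $\alpha$, so that the cascade is dominated by a convergent geometric series whose leading term is $O(\alpha^2)$. Combined with the LP bound on the leading-order contributions and the choice of $\alpha$ sufficiently small, this yields the contraction factor $1-\beta$ per round and hence the $O(\log n)$ mixing-time bound.
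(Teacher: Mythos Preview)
Your decomposition into type-(1), type-(2), and type-(3) contributions is correct and matches the paper's approach for the weaker bound $k>(11/6+\eps)\Delta$ (\cref{thm:main-weaker}). The type-(3) argument via an extra factor of $\alpha$ per hop is exactly right (this is \cref{lem:dist2-effect}), and the cascade does terminate at distance~$2$ for the reason you give.

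The genuine gap is your use of the Hamming metric for the full \cref{thm:main}. With the Hamming distance, no choice of flip probabilities yields contraction below $k=(11/6)\Delta$: there are local configurations around $v^*$ (in the paper's notation, $(3,2;(2),(1))$ and $(7,3;(3,3),(1,1))$) for which Vigoda's coupling gives expected Hamming increase exactly $(11/6)d_c(v^*)-1$, and the CDMPP19 flip probabilities you cite do not eliminate these extremal configurations---they merely reduce the set of tight configurations from six to two. So ``the refined LP and flip probabilities of~\cite{CDMPP19}'' alone cannot push below $11/6$; the LP with Hamming distance as objective is still tight there.

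What the paper (following~\cite{CDMPP19}) actually does is replace Hamming distance by a weighted pre-metric $\wHamming(\sigma,\tau)=1-\eta(1-\gamma(v^*))$, where $\gamma(v^*)$ is the fraction of neighbors of $v^*$ lying in extremal configurations. One then must bound not only the Hamming change $\nabla_H$ but also the change $\nabla_B$ in the correction term, which requires estimating the probabilities that a color moves from extremal to non-extremal and vice versa (\cref{lem:barnablocalneg,lem:barnablocalpos}). In the distributed setting this is substantially more delicate than in~\cite{CDMPP19}, because several clusters in the neighborhood of $v^*$ can flip in a single round, creating many new ways for configurations to change type. Your proposal is missing this entire layer of the argument.
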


Note again, the Markov chain $\MCflip$ is not necessarily symmetric, see \cref{rem:error}, and hence the stationary distribution is not the uniform distribution.

\section{Analysis of Distributed Flip Dynamics}
\label{sec:weaker}
Here we prove the following weaker version of \cref{thm:main} showing fast mixing when $k>(11/6 + \eps)\Delta$ for any $\eps>0$.
\begin{theorem}
\label{thm:main-weaker}
For all $\eps>0$, all $\Delta\geq 2$, for any $k>(11/6+\eps)\Delta$ and any graph $G=(V,E)$ of maximum degree $\Delta$, 
 the mixing time of $\MCflip$ is $O(\log{n})$.
\end{theorem}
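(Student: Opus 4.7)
The plan is to prove Theorem~\ref{thm:main-weaker} by the path coupling method of Bubley and Dyer~\cite{BubleyDyer}, adapting Vigoda's original coupling analysis~\cite{Vigoda} to the distributed setting. Fix the Hamming metric on colorings. By the path coupling principle it suffices to design a single-step coupling of $\MCflip$ for every pair $(X_t,Y_t)$ of colorings differing at exactly one vertex $v^*$, and to show $\E[\dist(X_{t+1},Y_{t+1})]\le 1-\kappa$ for some constant $\kappa=\Omega(\eps)$ independent of $n$ and $\Delta$. Since by \cref{lem:LOCAL} one step of $\MCflip$ takes $O(1)$ LOCAL rounds, a per-step contraction of $(1-\kappa)$ yields mixing in $O(\log n/\kappa)=O(\log n)$ rounds.

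The coupling is built zone by zone, indexed by the flip-distance to $v^*$. Let $c_X=X_t(v^*)$, $c_Y=Y_t(v^*)$. For any cluster whose vertex set and defining color pair coincide in $\SS_{X_t}$ and $\SS_{Y_t}$ (in particular, every cluster at graph distance $\ge 2$ from $v^*$ and using neither $c_X$ nor $c_Y$), we couple the activation bit, the conflict resolution, and the flip bit identically, so such clusters contribute zero to $\E[\dist(X_{t+1},Y_{t+1})]-\dist(X_t,Y_t)$. For clusters containing $v^*$ we couple greedy ``good'' flips that equalize the colors at $v^*$. For clusters that touch $N(v^*)$ or that use one of the colors $c_X,c_Y$, the cluster decomposition genuinely differs between the two chains, and we construct an explicit injective pairing between $\SS_{X_t}$ and $\SS_{Y_t}$ on this zone, coupling activation bits of paired clusters identically and independent activation bits for the few unpaired clusters.

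The expected change then splits into a \emph{gain} from good flips and a \emph{loss} from bad flips. A good flip is the flip of a cluster $S\ni v^*$ whose flip renders $X_{t+1}(v^*)=Y_{t+1}(v^*)$; each such $S$ is activated with probability $\alpha$ and flipped with probability $f_{|S|}$, and summing over the $\Theta(k)$ relevant color choices produces a gain of order $\alpha k\cdot(\text{Vigoda's gain coefficients})$. A bad flip either creates a new disagreement in $N(v^*)$ or at flip-distance two, or moves a disagreement at a paired cluster; the number of candidate bad clusters is $O(\Delta)$ and each is activated with probability $O(\alpha)$, with the cases involving asymmetric conflict resolution contributing at most $O(\alpha^2\Delta^{O(1)})$ as a second-order term. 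With $\alpha=\eps/(5000k)$, the leading contributions scale as $\alpha k=\Theta(\eps)$ so that Vigoda's linear program in $f_1,\dots,f_6$ for the sequential flip dynamics carries over, and the slack $(11/6+\eps)\Delta - (11/6)\Delta=\eps\Delta$ absorbs both the second-order asymmetric terms and the distortion introduced by pairing. This yields the desired $\kappa=\Omega(\eps)$ contraction.

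The main obstacle is the bookkeeping of the asymmetric conflict cases: a cluster $S$ present in both chains may, in $X_t$, have an active neighboring or overlapping cluster $T$ that does not exist in $\SS_{Y_t}$ because $T$'s color pair involves $c_X$ or intersects the perturbed neighborhood of $v^*$. The conflict rule of step~\ref{step:conflict_n} then deactivates $S$ in $X_t$ but not in $Y_t$, so $S$ flips only in $Y_t$ and can create a disagreement at flip-distance two from $v^*$, a range that Vigoda's original sequential analysis did not have to consider. The proof must enumerate the finite list of such local patterns, bound the probability of each using independence of activation across non-overlapping clusters, and verify that the resulting extra loss fits inside the $\Theta(\eps)$ surplus of the flip-probability LP; this is where the smallness of $\alpha$ and the $\eps$-slack on $k$ are simultaneously exploited.
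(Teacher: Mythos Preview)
Your proposal is correct and follows essentially the same approach as the paper: path coupling with the Hamming metric, identity coupling on clusters identical in both chains, Vigoda's coupling on the distance-$1$ clusters around $v^*$, and a separate $O(\alpha^2\Delta^2)$ bound on the distance-$2$ ``asymmetric conflict'' contributions, with the $\eps$-slack in $k>(11/6+\eps)\Delta$ absorbing all second-order error terms. One quantitative slip: the contraction constant you obtain is $\kappa=\Theta(\eps^2)$, not $\Theta(\eps)$, since the net gain is $\alpha\cdot\eps\Delta$ with $\alpha=\Theta(\eps/k)$; this does not affect the $O(\log n)$ conclusion.
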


Our specific choice of flip probabilities for this section and for the analysis in \cref{sec:coupling} are the following:
\begin{equation}
    \label{eq:flips}
 f_1:=1, \ f_2 := 13/42, \ f_3:=1/6, \ f_4:=2/21, \  f_5:=1/21, \ f_6 := 1/84, \mbox{ and } f_j:=0 \mbox{ for all } j\geq 7.
\end{equation}
These parameters match the original paper of Vigoda~\cite{Vigoda}; there are other parameter choices for which the analysis works, e.g., see~\cite{CDMPP19}, in fact, we will utilize these alternative parameters in~\cref{sec:LP}.  

We use a coupling argument that builds upon the analysis in Vigoda~\cite{Vigoda}. 
 In~\cref{sec:LP}, we further utilize the linear programming (LP) framework and the refined metric on colorings presented in Chen et al.~\cite{CDMPP19} to achieve the further improved result as stated in~\cref{thm:main}.  

\subsection{Path Coupling}
A key tool in our analysis of the mixing time is the path coupling technique of Bubley and Dyer~\cite{BubleyDyer}.
Consider an ergodic Markov chain $\MC$ with state space $\Omega$ and transition matrix~$P$.
A \emph{coupling} for $\MC$ defines, for all pairs $X_t,Y_t\in\Omega$, a joint transition
$(X_t,Y_t)\rightarrow (X_{t+1},Y_{t+1})$ such that the individual transitions $(X_t\rightarrow X_{t+1})$ and $(Y_t\rightarrow Y_{t+1})$, when viewed in isolation from each other, act according to the transition matrix $P$.
The goal is to find a coupling that minimizes the coupling time: 
$
    \Tcouple := \min\left\{t \mid \mbox{ for all }X_0,Y_0\in\Omega, \ProbCond{X_t\neq Y_t}{X_0,Y_0}\leq 1/4\right\}.
$
This implies that $\Tmix\leq\Tcouple$.

To bound the coupling time and hence the mixing time, we use the \emph{path coupling} method of Bubley and Dyer \cite{BubleyDyer} which allows us to only consider a small subset of pairs of states.  We will analyze the coupling with respect to the \emph{Hamming distance} $H(\sigma,\tau) :=\sum_{v\in V} \mathbf{1}(\sigma(v) \not = \tau(v))$. 
We present the more general form of path coupling in~\cref{sec:general-pathcoupling} which allows more general metrics.

\begin{theorem}\cite{BubleyDyer,DyerGreenhill}
\label{thm:path-coupling}
    Consider an ergodic Markov chain on $\Omega^*=[k]^V$.  Let $\beta>0$.  If for all pairs of states $X_t,Y_t\in\Omega^*$ where $H(X_t,Y_t)=1$, there exists a coupling such that:
    \[
    \ExpCond{H(X_{t+1},Y_{t+1})}{X_t,Y_t}\leq (1-\beta)H(X_t,Y_t),
    \]
    then the mixing time is bounded by
    $
        T_{mix} \leq O \left( \frac{\log(|V|)}{\beta} \right).
    $
    Moreover, the mixing time within total variation distance $\leq\delta$, for any $\delta>0$, in time $O(\log(|V|)/(\beta\delta))$.
\end{theorem}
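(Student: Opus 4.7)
The plan is to prove the path coupling theorem by bootstrapping the one-step contraction assumption (stated only for Hamming-adjacent pairs) into a contraction that holds for every pair in $\Omega^{*}$, and then converting the resulting geometric decay of expected Hamming distance into the desired mixing-time bound via the standard coupling lemma $\Tmix \leq \Tcouple$.

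First I would establish the \emph{path coupling lemma}: for any $X_t,Y_t \in \Omega^{*}$, there exists a coupling of one step with $\ExpCond{H(X_{t+1},Y_{t+1})}{X_t,Y_t} \leq (1-\beta)\,H(X_t,Y_t)$. Let $m = H(X_t,Y_t)$ and fix a shortest path $X_t = Z_0, Z_1, \ldots, Z_m = Y_t$ in the Hamming graph of $\Omega^{*}$, so that each pair $(Z_{i-1},Z_i)$ differs in exactly one vertex. By hypothesis, for each $i$ there is a one-step coupling $\mathcal{C}_i$ of the chain started from $(Z_{i-1},Z_i)$ satisfying the contraction bound. I would then glue these couplings together by sequential sampling: draw $Z_0' \sim P(Z_0,\cdot)$, then for $i=1,\ldots,m$ draw $Z_i'$ from the conditional distribution of the second coordinate of $\mathcal{C}_i$ given that its first coordinate equals $Z_{i-1}'$. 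Because each $\mathcal{C}_i$ is a valid coupling, each $Z_i'$ has marginal law $P(Z_i,\cdot)$ regardless of the preceding choices, and consecutive pairs $(Z_{i-1}',Z_i')$ have joint distribution $\mathcal{C}_i$. Setting $X_{t+1} := Z_0'$ and $Y_{t+1} := Z_m'$ therefore gives a genuine coupling of one step of the chain from $(X_t,Y_t)$. By the triangle inequality for $H$ and linearity of expectation,
\[
\ExpCond{H(X_{t+1},Y_{t+1})}{X_t,Y_t} \leq \sum_{i=1}^{m}\ExpCond{H(Z_{i-1}',Z_i')}{Z_{i-1},Z_i} \leq \sum_{i=1}^{m}(1-\beta) = (1-\beta)\,H(X_t,Y_t).
\]

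Next I would iterate. Using the extended coupling at every step and the tower property, $\E[H(X_t,Y_t) \mid X_0,Y_0] \leq (1-\beta)^t H(X_0,Y_0) \leq (1-\beta)^t |V|$, since $H$ on $[k]^V$ is bounded by $|V|$. Markov's inequality then yields
\[
\ProbCond{X_t \neq Y_t}{X_0,Y_0} = \ProbCond{H(X_t,Y_t)\geq 1}{X_0,Y_0} \leq (1-\beta)^t |V| \leq e^{-\beta t} |V|,
\]
so choosing $t = \lceil \beta^{-1}\log(4|V|)\rceil$ gives $\ProbCond{X_t \neq Y_t}{X_0,Y_0} \leq 1/4$, i.e., $\Tcouple = O(\log|V|/\beta)$; replacing $1/4$ by $\delta$ gives the refined bound $O(\log(|V|/\delta)/\beta)$ stated in the theorem.

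Finally I would invoke the coupling lemma: taking $Y_0$ distributed as the stationary distribution $\pi$ (which exists and is unique by ergodicity) and applying the coupling above, $Y_t \sim \pi$ for every $t$, and $\Vert P^t(X_0,\cdot) - \pi\Vert_{\mathrm{TV}} \leq \Prob{X_t \neq Y_t}$ by the standard fact that total variation distance is upper bounded by the probability of disagreement under any coupling. Combining this with the preceding bound yields $\Tmix \leq \Tcouple = O(\log|V|/\beta)$, completing the proof. The main obstacle is the gluing step in the path coupling lemma: one must verify that the sequentially constructed distribution truly has $Z_i' \sim P(Z_i,\cdot)$ for all $i$ simultaneously, and that each consecutive pair retains its prescribed coupling; once this is in place, the rest reduces to triangle inequality, geometric decay, and the standard coupling-to-mixing conversion.
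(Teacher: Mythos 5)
Your proof is correct and is the standard Bubley--Dyer path coupling argument; note the paper invokes Theorem~3.3 by citation to \cite{BubleyDyer,DyerGreenhill} rather than reproving it, and your gluing construction (sequential conditional sampling along a shortest Hamming path), triangle inequality, geometric decay, Markov's inequality, and coupling-to-mixing conversion are exactly the canonical steps from those sources. One small remark: you derive $O(\log(|V|/\delta)/\beta)$ for TV distance $\leq\delta$, which is tighter than the theorem's stated $O(\log(|V|)/(\beta\delta))$ and therefore implies it; the paper itself uses your sharper form when applying the theorem in the proof of Theorem~3.2.
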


\subsection{Overview}
We will analyze the mixing time of the chain $\MCflip$ using path coupling.
Consider a pair of colorings $X_t,Y_t$ which differ at exactly one vertex and let $v^*$ denote the disagreement, i.e., 
$X_t(v^*)\neq Y_t(v^*)$ and for all $w\neq v^*$, $X_t(w)=Y_t(w)$.
Our coupling is the identity coupling for all clusters that are the same in both chains, i.e., for all clusters $S$ where $S=S_{X_t}(w,c)=S_{Y_t}(w,c)$ for some $w\in V,c\in [k]$,
we use the identity coupling for the activation probability.  
By the identity coupling for the activation probability we mean that with probability $\alpha$ the cluster $S$ is active in both chains, and with probability $1-\alpha$ it is inactive in both chains.   
Moreover, if the cluster $S$ is flippable in both chains then we also use the identity coupling 
for the flip probability, which means that if both clusters are flippable then with probability $\alpha$ we flip the cluster in both chains and with probability $1-\alpha$ we flip the cluster in neither of the chains.  

We will define the distance of cluster $T$ from the disagree vertex $v^*$ based on the shortest path via neighboring clusters.

\begin{definition}
\label{def:distflip}
For a coloring $\sigma\in\{X_t,Y_t\}$, and a cluster $T\in\SS_{\sigma}$,  we define $\distflip(v^*,T)$ inductively as follows.  
If $T=\{v^*\}$ 
then let $\distflip_{\sigma}(v^*,T) = 0$. 
In general, let 
\[ \distflip_{\sigma}(v^*,T)=\min\{i:\mbox{there exists } S\in\SS_{\sigma} \mbox{ where } S\sim T, \distflip_{\sigma}(v^*,S)=i-1\}. 
\]
\end{definition}
See \cref{fig:clusterdist} for an illustration of  \cref{def:distflip}.
\begin{remark}
    Note, this notion of distance is equivalent to the shortest path distance from the singleton cluster $\{v^*\}$ in the {\em cluster graph}; the cluster graph is the graph on all clusters in coloring $\sigma$ where clusters $S$ and $T$ are adjacent if $S\sim T$.  Distance~$0$ clusters are the singleton sets $\{v^*\}$ for every color that does not appear in the neighborhood of $v^*$.  Distance~$1$ clusters are those that contain a neighbor of $v^*$ (regardless of whether they also contain~$v^*$).
\end{remark}

Any clusters $T$ where no vertex in $T$ is adjacent to $v^*$ are identical in the two chains, and thus, for every $i\geq 2$:
\[  T\in\SS_{X_t}, \distflip_{X_t}(v^*,T)= i \iff T\in\SS_{Y_t}, \distflip_{Y_t}(v^*,T)=i.
\] 
Similarly, the only clusters $T$ which ``disagree'' in the sense that they appear in only one chain then $T$ is at distance $1$ from $v^*$; more formally, if $T\in\SS_{X_t}\setminus\SS_{Y_t}$, then $\distflip_{X_t}(v^*,T)=1$, and if 
$T\in\SS_{Y_t}\setminus\SS_{X_t}$, then $\distflip_{Y_t}(v^*,T)=1$.
We use $\distflip(v^*,T)$ when the distances are equal, i.e., $\distflip_{X_t}(v^*,T)=\distflip_{Y_t}(v^*,T)$.

\begin{figure}
    \centering
    \includegraphics{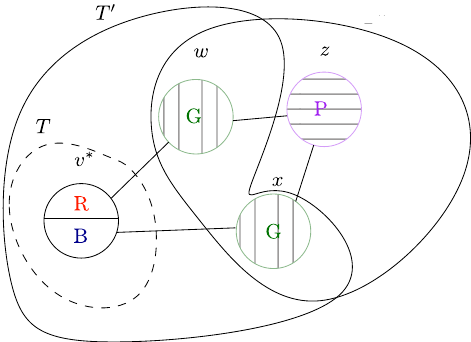}
    \caption{The vertex~$v$ has~$X_t(v) = R$ and~$Y_t(v) = B$. Here,~$T = S_\sigma(v, R) = S_\tau(v, B),$ where~$\sigma = X_t$ and~$\tau = Y_t$. By~\cref{def:distflip}, $\distflip(v^*, T) = 0, \distflip(v^*, T') = \distflip(v^*, T'') = 1$.}
    \label{fig:clusterdist}
\end{figure}

For such clusters $T$ where $\distflip(v^*,T)\geq 2$ we use the identity coupling for the activation probability in $X_t$ and $Y_t$, and thus the cluster $T$ is active in both chains or in neither chain.  It follows that for clusters~$T'$ with $\distflip(v^*,T')\geq 3$ then the cluster is flippable in both chains or in neither chain, as their neighboring active clusters are identical in the two chains.  Therefore, we can use the identity coupling for the flip probability of this cluster $T$ if the cluster is flippable, and such clusters are flipped in both chains or neither chain;
this leads to the following observation.

\begin{observation}
For any cluster $T'$ where $\distflip(v^*,T')\geq 3$,  
$
X_{t+1}(T') = Y_{t+1}(T').
$
\end{observation}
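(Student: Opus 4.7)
The plan is to show, under the identity coupling for distance-$\geq 2$ clusters described in the overview, that the coloring on $T'$ evolves identically in the two chains by controlling every cluster flip that could alter a color in $T'$.

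First I would establish the structural fact that no vertex of $T'$ is equal to $v^*$ nor adjacent to $v^*$. Otherwise the singleton cluster $\{v^*\}$ would satisfy either $\{v^*\}=T'$ or $\{v^*\}\sim T'$, forcing $\distflip(v^*,T')\leq 1$. Hence $X_t$ and $Y_t$ agree on $T'$ and on every neighbor of a vertex of $T'$, so the alternating-path definition gives $T'\in\SS_{X_t}\cap\SS_{Y_t}$ as the same cluster with the same color assignment. Moreover, any cluster $S$ that overlaps or neighbors $T'$ satisfies $\distflip(v^*,S)\geq\distflip(v^*,T')-1\geq 2$ (the singleton-overlap corner cases reduce to the first observation since any such $S$ would have to share a vertex with $T'$), so $S$ is the same cluster in $X_t$ as in $Y_t$ and, by identity coupling at distance $\geq 2$, is active in both chains or in neither.

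Next I would verify $T'$'s flippability and flip decision. Conditions (a)--(c) defining flippability depend only on $T'$'s own activation, on active overlaps of $T'$, and on active conflicting neighbors of $T'$, all of which coincide between the two chains by the previous step. Thus $T'$ is flippable in $X_t$ if and only if flippable in $Y_t$, and when flippable the identity coupling of the flip probability $f_{|T'|}$ produces the same flip decision. To upgrade to $X_{t+1}(T')=Y_{t+1}(T')$, I would then account for other clusters whose flip could change a color in $T'$: any such cluster $C$ contains some $v\in T'$, so either $C=T'$ or $C$ overlaps $T'$ at $v$. If $T'$ is active, condition (b) immediately prevents any overlapping $C$ from being flippable, so the only flip affecting $T'$ is $T'$'s own, which is identity-coupled. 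If $T'$ is inactive, then an overlapping $C$ still has $\distflip(v^*,C)\geq 2$, so $C$ is the same cluster with identity-coupled activation, and iterating the flippability-agreement argument one step outward gives $C$ the same flip decision in both chains.

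The main obstacle I anticipate is the boundary case $\distflip(v^*,T')=3$, where an overlapping cluster $C$ of $T'$ can sit at distance exactly $2$ from $v^*$ and hence be adjacent in the cluster graph to a distance-$1$ cluster $D$ whose identity may differ between $X_t$ and $Y_t$. The delicate step is to show that this potential disagreement does not break $C$'s flippability agreement: because every vertex of $C$ lies at graph-distance $\geq 2$ from $v^*$, any such $D$ must meet $C$ at a vertex whose neighborhood colors agree in the two chains, and the overlap rule together with the coupling rules for distance-$1$ clusters (specified in the coupling analysis) can be arranged to preserve $C$'s flippability. Once this check is handled, combining the identity-coupled behavior of $T'$ with the identity-coupled behavior of each overlapping $C$ yields the claimed equality $X_{t+1}(T')=Y_{t+1}(T')$.
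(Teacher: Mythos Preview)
Your first two paragraphs correctly reproduce, with more detail than the paper gives, the paper's own justification for the observation: the paper's entire argument is the paragraph immediately preceding the statement, which says that every overlapping or neighboring cluster of $T'$ lies at distance $\geq 2$, hence has identity-coupled activation, hence conditions (a)--(c) agree for $T'$, hence $T'$ is flipped in both chains or neither. The paper stops there.

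Your third paragraph goes beyond the paper by worrying about an overlapping cluster $C$ of $T'$ that sits at distance exactly $2$: such a $C$ can border a distance-$1$ cluster $D$ that lies in $\SS_{X_t}\oplus\SS_{Y_t}$, so $C$'s flippability could in principle differ between the chains, recoloring a vertex of $T'$ in only one chain. This is a legitimate reading of the literal statement $X_{t+1}(T')=Y_{t+1}(T')$, and your proposed resolution (``the coupling rules for distance-$1$ clusters \dots can be arranged to preserve $C$'s flippability'') is not a proof---in the $d_c(v^*)\geq 2$ setting the paired clusters $S_{X_t}(w_i,c_X)$ and $S_{Y_t}(w_i,c_X)$ can differ by more than $\{v^*\}$, so showing they impose the same overlap/conflict constraints on $C$ requires a genuine case analysis that neither you nor the paper carries out. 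However, the paper never actually needs the literal statement: in the Hamming-distance bound it only uses that the \emph{flip of $T'$ itself} is identity-coupled, because any disagreement at a vertex $v\in T'$ caused by a differently-flipped distance-$2$ cluster $C\ni v$ is already charged to $C$ in the term $\sum_{T:\distflip(v^*,T)=2}|T|\,\Prob{X_{t+1}(T)\neq Y_{t+1}(T)}$. So your first two paragraphs already prove what is needed; the obstacle you raise is real for the observation as phrased, but your attempted fix is not rigorous and, for the paper's purposes, unnecessary.
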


For clusters $T$ where $\dist_G(T,v^*) = 2$, it can occur that $T$ is flippable in only one of the chains (due to a neighboring cluster at distance $1$ that occurs in only one of the chains). Hence, there is a probability that such clusters can be a new disagreement. The upcoming \cref{lem:dist2-effect} proves that this occurs with an arbitrarily small constant probability.  

The following lemma bounds the expected increase in Hamming distance from flips on clusters at distance exactly $2$ from $v^*$.

\begin{lemma}
\label{lem:dist2-effect}
\[
\sum_{T: \distflip(v^*,T)=2} |T|\Prob{X_{t+1}(T)\neq Y_{t+1}(T)}
 \leq 300\Delta^2\alpha^2.
\]
\end{lemma}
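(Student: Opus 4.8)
The plan is to reduce the event $\{X_{t+1}(T)\neq Y_{t+1}(T)\}$, for a cluster $T$ with $\distflip(v^*,T)=2$, to the event that $T$ is flippable in exactly one of the two chains. Since no vertex of $T$ is adjacent to $v^*$, the cluster $T$ is identical in both chains, so its activation is identity‑coupled; and if $T$ is flippable in both chains its flip coin is identity‑coupled as well. Hence $X_{t+1}(T)\neq Y_{t+1}(T)$ forces $T$ to be active (in both chains) and flippable in exactly one chain. The only way the flippability of $T$ can differ between the chains is if some cluster $S$ that overlaps $T$, or that neighbors $T$ and shares a color with $T$, is active and \emph{disagrees} between the chains --- meaning $S$ is not identical (same vertex set and same color pair) in $X_t$ and $Y_t$ --- because every cluster that agrees in both chains is identity‑coupled and therefore influences the flippability of $T$ in exactly the same way in both chains. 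A short argument (essentially the one already used for $\SS_{X_t}\triangle\SS_{Y_t}$ in the overview) shows that every disagreeing cluster contains $v^*$ or a neighbor of $v^*$, hence has $\distflip(v^*,\cdot)=1$, and that there are only $O(\Delta)$ of them: each is either a $2$‑colored component through $v^*$ (at most one per color of $\sigma(N(v^*))$, so at most $\Delta$ per chain) or the ``$v^*$‑free remnant'' of such a component.

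Given this, I would bound, for each fixed $T$ with $\distflip(v^*,T)=2$,
\[
\Prob{X_{t+1}(T)\neq Y_{t+1}(T)}\;\le\;\Prob{T\text{ active}}\cdot\!\!\sum_{\substack{S\text{ disagreeing},\ \distflip(v^*,S)=1\\ S\text{ conflicts with }T}}\!\!\Prob{S\text{ active}}\;\le\;\alpha^2\cdot\#\{S\text{ disagreeing at dist }1\text{ conflicting with }T\},
\]
using that the activation of $T$ is independent of the activations of the (distinct) clusters $S$ and that each such $S$ is active (in the relevant chain) with probability at most $\alpha$, up to a factor $2$ absorbed into the constant. Summing over $T$ and exchanging the order of summation,
\[
\sum_{T:\distflip(v^*,T)=2}|T|\,\Prob{X_{t+1}(T)\neq Y_{t+1}(T)}\;\le\;\alpha^2\sum_{\substack{S\text{ disagreeing}\\ \distflip(v^*,S)=1}}\ \sum_{\substack{T:\distflip(v^*,T)=2\\ T\text{ conflicts with }S}}|T|,
\]
so it remains to show the inner sum is $O(\Delta)$ for each disagreeing $S$, which together with the $O(\Delta)$ disagreeing clusters yields $O(\Delta^2\alpha^2)$ with explicit constant at most $300$.

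To bound the inner sum I would split on the type of conflict. If $T$ \emph{overlaps} $S$, then $T$ contains a vertex $x\in S$; since $\distflip(v^*,T)=2$, the vertex $x$ is at graph‑distance $\ge 2$ from $v^*$, and $S$ (connected, containing $v^*$, of size $\le 6$) has at most $4$ such vertices, each lying in at most $\Delta+1$ clusters, contributing $O(\Delta)$ clusters $T$ of size $\le 6$. If instead $T$ \emph{neighbors} $S$ and shares a color with $S$, the key observation is that a vertex $x\in T$ adjacent to a vertex $y\in S$ cannot itself lie in $S$ (else $T$ would be at distance $\le 1$), so in a proper coloring $\sigma(x)$ is \emph{not} one of the two colors of $S$; hence if $T=S_\sigma(x,c')$ shares a color with $S$ then $c'$ must be one of the (at most two) colors of $S$. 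Thus for each of the $\le 6$ vertices $y\in S$ and each of its $\le\Delta$ neighbors $x$ there are at most two choices for $T$, each of size $\le 6$, again $O(\Delta)$ in total.

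The main obstacle is precisely this last counting step: a naive bound would incur one factor of $\Delta$ from the neighbors of a vertex of $S$ and another from the possible second colors $c'$, which combined with the $O(\Delta)$ disagreeing clusters would give only $O(\Delta^3)$. The observation that a vertex of $T$ neighboring $S$ but lying outside $S$ is forced to avoid both colors of $S$ --- pinning $c'$ to at most two choices --- is what collapses the neighbor‑conflict count to $O(\Delta)$ per cluster $S$ and produces the stated $O(\Delta^2)$ bound. Carefully tracking the numerical constants (the cluster sizes $\le 6$, the $\le\Delta$ colors at $v^*$, the $\le 2$ second colors, the overlap contribution, and the factor‑$2$ slack in $\Prob{S\text{ active}}$) then yields the explicit constant $300$.
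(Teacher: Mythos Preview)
Your proposal follows essentially the same strategy as the paper's proof: enumerate the $O(\Delta)$ disagreeing clusters $S\in\SS_{X_t}\oplus\SS_{Y_t}$ (all at $\distflip(v^*,S)=1$), for each such $S$ bound by $O(\Delta)$ the number of distance-$2$ clusters $T$ that can be blocked by $S$, and combine with the $\alpha^2$ probability that both are active together with $|T|\le 6$. The paper does this in three lines --- it simply asserts there are at most $4\Delta$ disagreeing clusters and that each has at most $6\cdot 2\Delta$ neighboring clusters sharing a color, giving $6\cdot(4\Delta)(12\Delta)\alpha^2=288\Delta^2\alpha^2$ --- whereas you unpack the ``two choices for $c'$'' argument and treat the overlap case separately.

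Two small glitches in your write-up are worth fixing. First, the justification ``else $T$ would be at distance $\le 1$'' for $x\notin S$ in the neighbor case is not correct: a distance-$2$ cluster $T$ can share a vertex with a distance-$1$ cluster $S$ without itself dropping to distance $\le 1$ (the shared vertex need not lie in $N(v^*)$). The true reason $x\notin S$ in that branch is simply that you already split off the overlap case, so $T\cap S=\emptyset$. Second, not every disagreeing $S$ contains $v^*$ (your own ``$v^*$-free remnants'' do not), so the bound ``at most $4$ vertices of $S$ at graph-distance $\ge 2$'' should be $5$; this is harmless for the final constant. In fact the separate overlap analysis is unnecessary: if $T$ overlaps $S$ at $z$ then, since $|S|\ge 2$, $z$ has a neighbor in $S$, so $T\sim S$ and they share the color $\sigma(z)$ --- hence the overlap case is already subsumed by the neighbor-with-shared-color count, which is why the paper does not mention it.
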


We will account for these potential disagreements at distance 2 via the clusters at distance~1.  For a cluster $T$ at distance 1 to occur in only one of the chains, the pair of colors defining $T$ must include color $X_t(v^*)$ or color $Y_t(v^*)$.

\begin{proof}[Proof of \cref{lem:dist2-effect}]
Let $\SS_{X_t}\oplus\SS_{Y_t} := (\SS_{X_t}\setminus\SS_{Y_t})\cup(\SS_{Y_t}\setminus\SS_{X_t})$ denote the set of clusters that appear in one chain but not in the other chain.
Consider a cluster $S\in \SS_{X_t}\oplus\SS_{Y_t}$.
Note, all such $S$ are at $\distflip(v^*,S)=1$.  

Let $c_X=X_t(v^*)$ and $c_Y=Y_t(v^*)$.  These 
clusters $S\in \SS_{X_t}\oplus\SS_{Y_t}$ are either:
\[ 
S_{X_T}(w,c_X), S_{X_T}(w,c_Y), S_{Y_T}(w,c_X), \mbox{ or } S_{Y_T}(w,c_Y),
\] for some neighbor $w\in N(v^*)$.  Hence, there are $\leq 4\Delta$ such clusters $S\in \SS_{X_t}\oplus\SS_{Y_t}$.  

Each such cluster $S$ has size $\leq 6$ and hence it has $\leq 6\cdot 2\Delta$ neighboring clusters $T$ that share a color with~$S$.  These clusters $T$ are at distance $=2$ from~$v^*$.  Note that if $S$ and $T$ are both active then $T$ is not flippable in one of the chains, but it may be flippable in the other chain where~$S$ does not appear; hence, the chains $X_{t+1}$ and $Y_{t+1}$ potentially differ at $T$.  This yields the following:
\[
\sum_{T: \distflip(v^*,T)=2} |T|\Prob{X_{t+1}(T)\neq Y_{t+1}(T)}
\leq 6\times(4\Delta)(12\Delta)\alpha^2 = 288\Delta^2\alpha^2.
\]    
\end{proof}

We now account for the ``good moves'' where the disagreement at $v^*$ is removed.  This occurs by Glauber updates at $v^*$ where we update $v^*$ to an available color, which is a color that does
not appear in its neighborhood. 

\begin{definition}
Denote the set of {\em available colors} for $v^*$ in $X_t$ as: 
\[A(v^*)=A_{X_t}(v^*) := \{c:c\notin X_t(N(v^*))\}.
\]
\end{definition}
Note, the sets $A_{X_t}(v^*) = A_{Y_t}(v^*)$ since $v^*$ is the only disagreement at time $t$.
 Consider a color $c\in A(v^*)$.  The clusters involving~$c$ to which~$v^*$ belongs satisfy $S_{X_t}(v^*,c) = S_{Y_t}(v^*,c)= \{v^*\}$
 and hence the identity coupling is used for this cluster.  Therefore, with probability $\alpha$ the cluster is active in both chains and if 
 no active clusters overlap and no neighboring clusters have a common color then $v^*$ is recolored to $c$.

We can now bound the probability of $v^*$ agreeing at time $t+1$ in terms of the number of available colors for $v^*$. 

\begin{lemma} $\Prob{X_{t+1}(v^*)=Y_{t+1}(v^*)}\geq |A(v^*)|\alpha (1 - 10k\alpha).$
\label{lem:coupling-agree}
\end{lemma}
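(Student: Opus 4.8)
The plan is to lower-bound the probability that the disagreement at $v^*$ is removed by summing, over all available colors $c\in A(v^*)$, the probability of the "clean" event that $c$ is the only active cluster touching $v^*$ or its neighborhood that could interfere, so that $v^*$ is recolored to $c$ in both chains simultaneously. Since these events (for distinct $c,c'\in A(v^*)$) are disjoint — the chain recolors $v^*$ to at most one color — the bound becomes a sum of per-color probabilities. First I would fix $c\in A(v^*)$ and recall that $S_{X_t}(v^*,c)=S_{Y_t}(v^*,c)=\{v^*\}$, so the identity coupling applies: with probability $\alpha$ this singleton cluster is active in both chains, and $f_1=1$ so it would be flipped (recoloring $v^*$ to $c$) if it is flippable. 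The flip produces $X_{t+1}(v^*)=Y_{t+1}(v^*)=c$, removing the disagreement.

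Next I would bound the probability that $\{v^*\}$ with color $c$ fails to be flippable, conditioned on it being active. By the flippability definition, this requires either (b) an active overlapping cluster $S'\neq\{v^*\}$ with $v^*\in S'$, or (c) an active neighboring cluster $T=T(w,c')$ with $w\in N(v^*)$ (since $S=\{v^*\}$, $S\sim T$ forces $T$ to contain a neighbor of $v^*$) such that $\{X_t(v^*),c\}\cap\{X_t(w),c'\}\neq\emptyset$. For (b): the clusters $S'$ containing $v^*$ are exactly the $S_{X_t}(v^*,c'')$ for the (at most $k$) colors $c''$, and each is active independently with probability $\alpha$; a union bound gives at most $k\alpha$. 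For (c): a neighboring cluster $T$ sharing a color with $\{v^*\}$ must have one of its two defining colors equal to $X_t(v^*)$ or $c$; for each neighbor $w\in N(v^*)$ (at most $\Delta$ of them) and each such forced color together with the at most $k$ choices for the other color, there are at most $2\cdot\Delta\cdot k$ such potential clusters, each active with probability $\alpha$; however, since we only need to beat a constant factor, a cruder union bound over the $\leq 2k$ such clusters per neighbor — actually I would bound the total number of these interfering clusters by roughly $9k$ so that combined with (b) the total failure probability conditioned on activation is $\leq 10k\alpha$. Taking a union bound over the two failure modes and summing:
\[
\Prob{X_{t+1}(v^*)=Y_{t+1}(v^*)}\;\geq\;\sum_{c\in A(v^*)}\alpha\bigl(1-10k\alpha\bigr)\;=\;|A(v^*)|\,\alpha\,(1-10k\alpha),
\]
where the disjointness of the events over distinct $c$ justifies the sum rather than an inclusion–exclusion.

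The main obstacle is getting the constant in the $(1-10k\alpha)$ factor right while keeping the counting honest: one must be careful that the interfering clusters for different available colors $c$ are handled consistently (the overlapping-cluster bound (b) does not depend on $c$, but the neighboring-conflict bound (c) does, through the pair $\{X_t(v^*),c\}$), and that no cluster is double-counted between the "overlapping" and "neighboring" categories. I expect the cleanest route is to observe that all clusters relevant to the flippability of $\{v^*\}$ for any $c\in A(v^*)$ live within the closed neighborhood of $v^*$ and are indexed by a neighbor (or $v^*$ itself) together with a color, giving a crude uniform bound of $\leq 10k$ such clusters, each active with probability $\alpha$ independently; a single union bound then yields the conditional failure probability $\leq 10k\alpha$ uniformly in $c$, and the rest is the disjoint-events sum above.
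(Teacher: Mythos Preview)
Your approach is essentially the paper's: for each $c\in A(v^*)$ you lower-bound the probability that the singleton cluster $\{v^*\}$ is active and flippable in both chains, then sum over these disjoint events; the paper writes the flippability probability as $(1-\alpha)^{4\Delta+k-1}$ and simplifies via $1-x\geq e^{-2x}\geq 1-2x$, whereas you use a direct union bound, but these are interchangeable here.

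One imprecision worth fixing: flippability must hold in \emph{both} chains, so for overlapping clusters you need the $S_{Y_t}(v^*,c'')$ as well as the $S_{X_t}(v^*,c'')$ you listed (the paper counts $2(k-1)$ overlapping clusters for this reason). For neighboring conflicts, your count wanders from $2\Delta k$ to ``roughly $9k$'' without justification; the clean argument is that $c\in A(v^*)$ forces $X_t(w)\notin\{X_t(v^*),c\}$ for every $w\in N(v^*)$, so only the two clusters $S_{X_t}(w,X_t(v^*))$ and $S_{X_t}(w,c)$ at $w$ can conflict in $X_t$, and similarly in $Y_t$, giving $\leq 4\Delta$ total. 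The honest count is thus $\leq 2(k-1)+4\Delta<6k$, so the $10k\alpha$ bound survives with room to spare.
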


\begin{proof}
For each color $c\in A(v^*)$ note
$S_{X_t}(v^*,c)=S_{Y_t}(v^*,c)=\{v^*\}$.
Hence, for $c\in A(v^*)$, let $S_c=S_{X_t}(v^*,c)=S_{Y_t}(v^*,c)$ denote this cluster of size 1 which appears in both chains.  Since $S_c$ appears in both chains we use the identity coupling 
for being active so that with probability $\alpha$ the cluster $S_c$ is active in both chains, and with probability $1-\alpha$ the cluster $S_c$ is inactive in both chains.
The cluster $S_c$ may have different neighboring clusters in the two chains (which affects whether it is flippable) but if it is flippable in both chains then with probability $f_1=1$ we flip the cluster in both chains.

There are at most $2\Delta$ neighboring clusters in each chain that share a color with one of the respective cluster, and there are $k-1$ clusters (namely those at $v^*$) that overlap with these clusters. If none of the~$2\cdot 2\Delta$ neighboring clusters is active, and none of the~$2(k-1)$ overlapping clusters is active in either chain, then we can flip~$\{v^*\}$ in both chains. After this flip, $v^*$ agrees in both chains, and hence we obtain: 
\begin{align*}
 \Prob{X_{t+1}(v^*)=Y_{t+1}(v^*)} 
 & \geq |A(v^*)|\alpha(1 - \alpha)^{4 \Delta + k-1}
\\
& \geq |A(v^*)|\alpha\exp(-2\alpha(4\Delta+k-1))
\\
& \geq |A(v^*)|\alpha\exp(-10k\alpha) 
\\
&\geq |A(v^*)|\alpha(1-10k\alpha),
\end{align*}
where the second inequality uses the fact that $1-x\geq\exp(-2x)$ for $x\leq 1/2$.
\end{proof}

The upcoming lemma captures the potential disagreements that arise from flipping clusters at distance one.  
The coupling on clusters containing $v^*$ or neighboring $v^*$ in at least one chain will be coupled based on the new color $c$.
\begin{definition}
 For a color $c\in [k]$, let $N_c(v^*) = \{w\in N(v):X_t(w)=c\}=\{w\in N(v):Y_t(w)=c\}$ denote the neighbors of $v^*$ with color~$c$, and let $d_c(v^*) = |N_c(v^*)|$ denote the number of neighbors of $v^*$ with color~$c$ at time $t$.
 
 Let $\SS_{X_t}(c)$ denote the collection of clusters at distance $1$ in $X_t$ that involve color $c$:
\[ \SS_{X_t}(c):= \{S_{X_t}(w,X_t(v^*))\mid w\in N_c(v^*) \}\cup\{S_{X_t}(w,Y_t(v^*))\mid w\in N_c(v^*)\},
\]
and similarly
let $\SS_{Y_t}(c)$ denote the corresponding collection for the coloring $Y_t$.
\end{definition}

The sets $\SS_{X_t}(c)$ and $\SS_{Y_t}(c)$ are coupled with each other.  We will specify the detailed coupling later, for now all
that is needed is that these sets $\SS_{X_t}(c)$ and $\SS_{Y_t}(c)$ are coupled with each other.  
We can now state the key lemma bounding the increase in Hamming distance when we do a coupled update on these sets
$\SS_{X_t}(c), \SS_{Y_t}(c)$.

Recall, that for $\sigma,\tau\in\Omega$, $H(\sigma,\tau)=|\{v\in V:\sigma(v)\neq\tau(v)\}|$ is the Hamming distance.  For a color $c$ where $d_c(v^*)\geq 1$ we are concerned with the change in Hamming distance involved in flips with respect to color $c$ which we formalize as follows.  
For a color $c$, let 
\[ H_c(X_{t+1},Y_{t+1}) = | \{v\in V\setminus\{v^*\}:\sigma(v)\neq\tau(v), v\in S \mbox{ for some } S\in\SS_{X_t}(c)\cup\SS_{Y_t}(c)\}|.
\]
The following lemma bounds the Hamming distance, after the coupled flip, on vertices involved in $dist(v^*,S)$ $=1$ flips related to color $c$.

\begin{lemma}
\label{lem:disagree-effect}
Let $c\in [k]$ where $d_c(v^*)>0$.   
Then,
\[
\Exp{H_c(X_{t+1},Y_{t+1})}
\leq 
\alpha\left(\frac{11}{6}d_c(v^*)-1\right)(1+400k\alpha).
\]
\end{lemma}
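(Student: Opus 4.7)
The plan is to adapt Vigoda's sequential cluster-swap coupling to the distributed setting and to quantify the loss from distributed-only conflicts (namely, overlapping or conflicting neighboring active clusters that arise in only one of the two chains). The ideal bound $\alpha\bigl(\tfrac{11}{6}d_c(v^*)-1\bigr)$ will come straight out of Vigoda's analysis~\cite{Vigoda}, and the multiplicative correction $(1+400k\alpha)$ will account for the probability that a paired coupling fails due to a distributed conflict.

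First, I would construct the coupling between clusters in $\SS_{X_t}(c)$ and $\SS_{Y_t}(c)$. For each $w\in N_c(v^*)$ and each $c'\in\{X_t(v^*), Y_t(v^*)\}$, pair $S_{X_t}(w,c')$ with $S_{Y_t}(w,c')$, using the identity coupling on the activation bit and, conditioned on cluster size, on the flip bit. Pairs whose two clusters coincide as sets contribute no disagreement under the coupling. For a pair of distinct clusters---which can only occur because $v^*$ disrupts the alternating path in one chain---the coupled flip creates or resolves disagreements exactly as in the sequential case analysis of~\cite{Vigoda}.

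Second, in an ``ideal'' scenario where no cluster outside $\SS_{X_t}(c)\cup\SS_{Y_t}(c)$ becomes active and interferes, I would invoke Vigoda's combinatorial bound with the flip probabilities~\eqref{eq:flips}. The factor $\alpha$ replaces his $1/n$ activation probability, the coefficient $\tfrac{11}{6}$ is the per-neighbor worst-case contribution to Hamming distance, and the $-1$ savings arises from the paired clusters containing $v^*$ whose coupling benefits from $v^*$'s change of color. Summing over the $d_c(v^*)$ neighbors yields $\alpha\bigl(\tfrac{11}{6}d_c(v^*)-1\bigr)$.

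Third, to account for coupling failures I would enumerate the two failure modes: (a) a third active cluster overlaps a paired cluster in only one chain (thus deactivating the paired cluster there but not in the other), or (b) a third active cluster neighbors a paired cluster and shares a color with it in only one chain. Each paired cluster has size at most $6$, so it has at most $O(k)$ potential overlapping clusters (one per color per vertex) and at most $O(\Delta)$ potential color-sharing neighbors, each active independently with probability $\alpha$. Thus the probability of coupling failure for a single pair is $O(k\alpha)$ and, when it occurs, contributes at most $|S|\leq 6$ additional Hamming distance. Summing and normalizing against the ideal bound gives the multiplicative factor $(1+400k\alpha)$.

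The main obstacle will be the rigorous verification of Step~2: matching Vigoda's per-cluster case analysis against our distributed coupling, particularly when a cluster has no natural counterpart in the other chain. I expect this will require carefully accounting for the asymmetries between $\SS_{X_t}(c)$ and $\SS_{Y_t}(c)$ and checking that our identity coupling on activation and size-conditioned flip bits respects the correct marginals in each cluster-size case.
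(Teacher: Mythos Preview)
Your high-level outline is right, and Step~3 matches what the paper does: the paper isolates the event ``a coupled cluster is active but not flippable'' and bounds its probability by $20k\alpha^2$ per cluster (their \cref{lem:flip-diff}), which is exactly the source of the multiplicative $(1+400k\alpha)$ correction. So the distributed bookkeeping is on track.

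The gap is in Step~1. The coupling you describe---pair $S_{X_t}(w,c')$ with $S_{Y_t}(w,c')$ for each $c'\in\{X_t(v^*),Y_t(v^*)\}$ and maximally couple the flip bits---is \emph{not} Vigoda's coupling, and it does not yield the $\tfrac{11}{6}$ bound. Two concrete problems:
\begin{itemize}
\item \textbf{Missing cross-color coupling.} Already for $d_c(v^*)=1$, after you pair the $(c,R)$-clusters and the $(c,B)$-clusters, there are residual flips: $S_{Y_t}(w,R)$ flips alone with probability $f_i-f_{i+1}$ and $S_{X_t}(w,B)$ flips alone with probability $f_j-f_{j+1}$. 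Vigoda (and the paper) couple these two \emph{residuals} together, saving one unit of Hamming distance because both clusters contain $w$. Without this cross-coupling, take $i=j=1$: your expected increase is $2\alpha(f_1-f_2)=\alpha\cdot 29/21\approx 1.38\alpha$, which already exceeds the target $\alpha\cdot 5/6$.
\item \textbf{Ill-defined pairing when $d_c(v^*)\geq 2$.} For $c'=c_X:=X_t(v^*)$, all of the clusters $S_{X_t}(w_i,c_X)$ collapse to a single large cluster $\That=S_{X_t}(v^*,c)$ (since $v^*$ has color $c_X$ and is adjacent to every $w_i$), whereas the $S_{Y_t}(w_i,c_X)=T_i$ are distinct. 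Your rule would pair one activation bit with many, which cannot respect the independent-activation marginals. The paper (following Vigoda) instead couples $\That$ with the single largest $T_{j_{\max}}$, couples $\Shat$ with $S_{i_{\max}}$, and then \emph{cross-couples} each remaining $S_i$ (a $(c,c_Y)$-cluster in $X_t$) with $T_i$ (a $(c,c_X)$-cluster in $Y_t$). This asymmetric, cross-color pairing is what makes the combinatorial bound $\Phi\leq \tfrac{11}{6}d_c(v^*)-1$ go through via Properties~\ref{Flip:P1} and~\ref{Flip:P2}.
\end{itemize}
So you cannot ``invoke Vigoda's bound'' after Step~1 as written; you need to reproduce Vigoda's actual coupling (with the residual cross-couplings and the $\Shat,\That$ pairings) and then overlay your Step~3 failure accounting on top of each coupled pair. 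The case split the paper carries out ($d_c=1$, $d_c=2$, $d_c\geq 3$) is not incidental---it is where the $\tfrac{11}{6}$ constant is actually earned.
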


The proof of \cref{lem:disagree-effect} is deferred to Section~\ref{sec:coupling}.  Combining the above lemmas we can prove \cref{thm:main-weaker}, which is the slightly weaker main result in that it holds for $k\geq (11/6 + \eps)\Delta$ for any $\eps>0$.

\subsection{Proof of \cref{thm:main-weaker}}
\label{sec:proof-weaker}
We can extend the definition of our Markov chain (see \cref{sec:distributed}) to be over all labellings $\Omega^*=[k]^V$ instead of just proper colorings $\Omega$;
this is necessary to apply path coupling~\cref{thm:path-coupling}.
An identical approach is used in both \cite{Vigoda} and \cite{CDMPP19}. 

The definition of the Markov chain described in~\cref{sec:distributed} is identical, we simply extend the state space.  A set $S_\sigma(v,c)$ is still defined as the set of vertices reachable from $v$ by a $(\sigma(v),c)$ alternating path.  And hence the notion of a cluster is still the same as before.  Note, that while the chain restricted to proper colorings is symmetric, this is not necessarily true for improper colorings.  
All of the bounds stated in~\cref{sec:weaker} hold for possibly improper colorings $X_t,Y_t\in\Omega^*$.

Consider a 
labelling
$X_0 \in \Omega^*\setminus \Omega$; note, $X_0$ is not a proper coloring since $X_0\not\in\Omega$. 
For $k\geq\Delta+2$, there is a sequence of transitions with non-zero probability (e.g., a sequence of Glauber moves as in the proof of irreducibility) so that it reaches a proper coloring, i.e., $X_t\in\Omega$ for some $t\geq 0$.  Moreover, for any proper coloring $X_t\in\Omega$ then it stays on proper colorings, i.e., $X_{s}\in\Omega$ for all $s\geq t$, as the process does not introduce improper colorings.  
Therefore, states in~$\Omega$ are the only ones which have positive probability in the stationary distribution, and hence the stationary distribution of the chain defined on all labellings $\Omega^*$ is the same as the stationary distribution of the original chain defined on the set of proper colorings~$\Omega$.

If the initial state is restricted to $\Omega$, i.e., $X_0$ is a proper coloring, then the chain
is identical to the process defined in~\cref{sec:MC}.
Furthermore, since the mixing time is defined from the worst initial state then a mixing time upper bound for the chain defined on $\Omega^*$ implies the same bound on the mixing time for the chain from~\cref{sec:MC} defined only on~$\Omega$.

We now have all the tools necessary to prove~\cref{thm:main-weaker}.

\begin{proof}[Proof of Theorem~\ref{thm:main-weaker}]
First consider the available colors for $v^*$.  Let $d(v^*)=\sum_{c\in [k]} d_c(v^*) = |N(v^*)|$ denote the degree of $v^*$.  Note that, since there is an extra available color for every time a color repeats in $N(v^*)$, we have the following bound on the number of available colors: 
\begin{equation} \label{eqn:available}
|A(v^*)| =  k - d(v^*) + \sum_{c\in [k]:d_c(v^*)\geq 2} (d_c(v^*)-1).
\end{equation}
Hence, we have the following bound which will be useful in the upcoming calculation:
\begin{equation}
\label{eqn:interA}
    |A(v^*)| - \sum_{c: d_c(v^*)>0} \left((11/6)d_c(v^*)-1\right) = k - d(v^*) -\sum_{c:d_c(v^*)>0}(5/6)d_c(v^*) 
    \geq k - (11/6)\Delta \geq \eps\Delta,
\end{equation}
since $k\geq ((11/6) + \eps)\Delta$.

Now by combining \cref{lem:coupling-agree,lem:dist2-effect,lem:disagree-effect} we can complete the proof of the theorem:
\begin{align}
\lefteqn{ 
\ExpCond{H(X_{t+1},Y_{t+1})}{X_t,Y_t} 
}
\nonumber
\\
\nonumber
& \qquad \leq 
1 - \Prob{X_{t+1}(v^*)=Y_{t+1}(v^*)} 
+ \sum_{c:d_c(v^*)>0}\Exp{H_c(X_{t+1},Y_{t+1})} 
\\ 
\nonumber
& \hspace*{1in} + \sum_{T:\distflip(v^*,T)= 2} 
|T|\Prob{X_{t+1}(T)\neq Y_{t+1}(T)}
\\
\nonumber
& \qquad \leq  1 - \alpha|A(v^*)|(1-10k\alpha)
+\sum_{c: d_c(v^*)>0} \left[\alpha\left(\frac{11}{6}d_c(v^*)-1\right)(1+400k\alpha)\right] 
 + \ 300\Delta^2\alpha^2
\\
\nonumber
&  
 \hspace{4.25in}
 \mbox{by \cref{lem:coupling-agree,lem:dist2-effect,lem:disagree-effect}}
\\ 
\nonumber
& \qquad \leq 
1 - \alpha\eps\Delta + 10k^2\alpha^2 + 800k\Delta\alpha^2 + 300\Delta^2\alpha^2
 \hspace{1.25in}
 \mbox{by \cref{eqn:interA}}
\\ 
\label{eqn:inter-thmweak}
& \qquad \leq 
1 - \alpha\eps\Delta + 1110k^2\alpha^2
\\
\nonumber
& \qquad \leq  1 - \eps^2/60000 
\hspace{3in}
\mbox{since $\alpha=\eps/(5000k)$,}
\end{align}
where the last inequality assumes $k\leq 3\Delta$, since the case $k>(2+\eps)\Delta$ was established by \cite{FG18,FHY18}.

Finally, applying the path coupling~\cref{thm:path-coupling} we obtain mixing time $O(\log{n})$.  Moreover, we obtain mixing time within total variation distance $\leq\delta$, for any $\delta>0$, in time $O(\log(n/\delta))$.
\end{proof}

\section{Coupling Analysis for Neighboring Clusters}
\label{sec:coupling}

We now prove \cref{lem:disagree-effect}.   We need to show that the expected change in the Hamming distance from coupled flips at distance $=1$ which involve color $c$ is at most: 
\begin{equation}
\label{eqn:4.8-main}
    \alpha\left(\frac{11}{6}d_c(v^*) -1\right)(1+400k\alpha).
\end{equation}
We note that a bound of $\frac{11}{6}d_c(v^*) -1$ was established by Vigoda~\cite{Vigoda} in the sequential setting when $k>\frac{11}{6}\Delta$.  The factor $(1+200k\alpha)$ will come from \cref{lem:flip-diff} below.

Before delving into the proof of \cref{eqn:4.8-main} we state several key properties of the settings for the flip probabilities in~\cref{eq:flips}:
\begin{enumerate}
    \item\label{Flip:P1} 
    For all integer $i,j\geq 1$, 
    $i(f_i-f_{i+1}) + (j-1)(f_j-f_{j+1}) \leq 5/6.$
    \item \label{Flip:P2}
    For all integer $i\geq 1$, 
    $2(i-1)f_i+ f_{2i+1} \leq 2/3.$
\end{enumerate}

 Fix a color $c\in [k]$ where $d_c(v^*)>0$; we will consider two cases:  $d_c(v^*)=1$ or $d_c(v^*)\geq 2$.

\subsection{Flippable Difference}
\begin{lemma}
\label{lem:flip-diff}
    For any cluster $C$,
$
    \Prob{C \mbox{ is active and not flippable}}\leq 20k\alpha^2.
    $
\end{lemma}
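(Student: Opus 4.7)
The plan is a union bound over the clusters whose activation obstructs $C$'s flippability. Unpacking the definition in~\cref{sec:distributed}, the event $\{C \text{ active and not flippable}\}$ equals $\{C \text{ active}\}$ intersected with the union of (i) there exists an active $S' \neq C$ with $S' \cap C \neq \emptyset$ (violating condition~(b)), or (ii) there exists an active $T \sim C$ whose two defining colors intersect $C$'s two defining colors $\{a,b\}$ (violating condition~(c)). Since cluster activations in Step~1 of the chain are mutually independent Bernoulli$(\alpha)$, we immediately obtain
\[
\Prob{C \text{ is active and not flippable}} \leq \alpha \cdot M \cdot \alpha,
\]
where $M$ is the number of distinct obstacle clusters of the two types combined.

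To bound $M$, I would argue by direct counting. For overlap obstacles: each vertex $v \in C$ belongs to at most $k$ clusters in $\SS_\sigma$, namely $S_\sigma(v,c')$ for $c' \in [k]$; since $|C|\leq 6$, the number of overlapping clusters is at most $6k$. For conflict obstacles: maximality of $C$ as a $(a,b)$-alternating component forces every external neighbor $w$ of $C$ (i.e., $w\notin C$ and $w \sim u$ for some $u \in C$) to satisfy $\sigma(w) \notin \{a,b\}$, since otherwise the alternating path from $C$'s base vertex through $u$ extends to $w$, placing $w\in C$. Consequently, a cluster $T = S_\sigma(w,c')$ based at such a $w$ intersects $\{a,b\}$ in its defining pair only when $c'\in\{a,b\}$, giving at most~$2$ conflicting clusters per external neighbor. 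With at most $|C|\Delta\leq 6\Delta$ external neighbors, this gives at most $12\Delta$ conflicting neighboring clusters. Summing, $M\leq 6k + 12\Delta$.

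Finally, since $\Delta \leq k$ in our setting (this already holds under the ergodicity condition $k\geq\Delta+2$, and certainly in the regime $k > (11/6-\eps^*)\Delta$ of~\cref{thm:main}), we have $12\Delta\leq 14k$ and hence $M \leq 20k$, yielding $\Prob{C \text{ active and not flippable}}\leq 20k\alpha^2$ as claimed. The step I expect to require the most care is the alternating-path closure argument justifying $\sigma(w)\notin\{a,b\}$ for external neighbors; this is immediate for proper colorings, while extending to the state space $\Omega^*$ used in path coupling requires separately handling monochromatic edges from $C$ to external vertices, and verifying that such contributions fit within the slack between $6k+12\Delta$ and the target bound $20k$.
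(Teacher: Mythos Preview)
Your proposal is correct and follows essentially the same approach as the paper: a union bound giving at most $6k$ overlapping clusters and at most $12\Delta$ conflicting neighboring clusters (two per boundary vertex), hence $\alpha(6k+12\Delta)\alpha\leq 20k\alpha^2$. You are in fact more careful than the paper in justifying the ``at most two'' count via maximality of the alternating component and in flagging the improper-coloring subtlety on $\Omega^*$, which the paper's proof glosses over.
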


\begin{proof}
The cluster $C$ is active with probability $\alpha$.  Assuming $C$ is active, there are two ways that $C$ is not flippable, either (i) an overlapping cluster, or (ii) a neighboring cluster that shares a color with $C$.  For case (i), since $|C|\leq 6$ and each vertex is in $k$ clusters, then the probability of a cluster that overlaps $C$ also being active is $\leq \alpha 6k\alpha$.  For case (ii), there are $\leq 6\Delta$ neighboring vertices, each has $\leq 2$ clusters that share a color, and hence the probability of case (ii) is $\leq \alpha 12\Delta\alpha$.
Combining the above calculations we have the following:
  \[
    \Prob{C \mbox{ is active and not flippable}}\leq 
    \alpha(6k\alpha + 12\Delta\alpha)\leq 20k\alpha^2.
    \]
\end{proof}

\subsection{Color Appears Once}
\label{sec:coloronce}
Suppose $d_c(v^*)=1$. Let $w\in N(v^*)$ be the unique neighbor where $X_t(w)=Y_t(w)=c$, and let 
$R:=X_t(v^*)$ and $B:=Y_t(v^*)$.  
We are coupling the clusters in the set $\SS_{X_t}(c)$ with 
$\SS_{Y_t}(c)$, and since $d_c(v^*)=1$ these sets are the following:
\[ \SS_{X_t}(c) = \{ S_{X_t}(w,R), S_{X_t}(w,B)\} \mbox{ and } \SS_{Y_t}(c)=\{S_{Y_t}(w,R), S_{Y_t}(w,B)\}.\]
Observe
$S_{X_t}(w,R)=S_{Y_t}(w,R)\cup\{v^*\}$
and
$S_{Y_t}(w,B)=S_{X_t}(w,B)\cup\{v^*\}$.
Let $i:=|S_{Y_t}(w,R)|$ (hence, $|S_{X_t}(w,B)|=i+1$), and 
let $j:=|S_{X_t}(w,B)|$ ($|S_{Y_t}(w,B)|=j+1$).  Note, $i,j\geq 1$.

We couple the clusters in the following manner.  
With probability $\alpha$, cluster $S_{X_t}(w,R)$ is active in $X_t$ and $S_{Y_t}(w,R)$ is active in $Y_t$, while with probability $1-\alpha$ both of these clusters are inactive .  Similarly, with probability~$\alpha$  
then both: cluster $S_{X_t}(w,B)$ is active in $X_t$ and $S_{Y_t}(w,B)$ is active in $Y_t$.  

Suppose that $S_{X_t}(w,R)$ and $S_{Y_t}(w,R)$ are both flippable.  In this case we maximize the probability that we flip both clusters.  Since $f_i\geq f_{i+1}$ then with probability $f_{i+1}$ we flip both clusters $S_{X_t}(w,R)$ and $S_{Y_t}(w,R)$, assuming they were both flippable.  Similarly, with probability $f_{j+1}$ we flip both clusters $S_{X_t}(w,B)$ and $S_{Y_t}(w,B)$, assuming they were both flippable.   Note in both of these cases where we flip both $S_{X_t}(w,R)$ and $S_{Y_t}(w,R)$ or we flip both $S_{X_t}(w,B)$ and $S_{Y_t}(w,B)$, then the Hamming distance does not change as the chains only differ at $v^*$ after the coupled update.

Suppose that all 4 clusters were flippable.  (Recall an active cluster $S$ is flippable if there is no overlapping active cluster and no neighboring active cluster that shares one of the two colors with $S$.). Then after the above coupling of $S_{X_t}(w,R)$ with $S_{Y_t}(w,R)$, and $S_{X_t}(w,B)$ with $S_{Y_t}(w,B)$, there remains probability $f_j-f_{j+1}$ to flip $S_{X_t}(w,B)$, and probability $f_i-f_{i+1}$ to flip $S_{Y_t}(w,R)$.  We maximally couple these remaining flips and hence with 
probability $\min\{f_i-f_{i+1},f_j-f_{j+1}\}$ we couple the flips of clusters $S_{X_t}(w,B)$ and $S_{Y_t}(w,R)$.  Note in this case where we flip both $S_{X_t}(w,B)$ and $S_{Y_t}(w,R)$ then the Hamming distance increases by $\leq (i+j-1)$ since $S_{X_t}(w,B)\cap S_{Y_t}(w,R)\supseteq\{w\}$.  

In the above coupling, we considered all coupled flips of cluster pairs.
For each pair, one of these clusters may be flippable and the other is not (due to a neighboring or overlapping cluster also being active).  
In that case, we flip the flippable cluster by itself, and the Hamming distance increases by at most $6$ since the cluster is of size at most $6$.   
By \cref{lem:flip-diff} the probability of this occurring for a specific cluster is at most $20k\alpha^2$.
Hence, the total increase in expected distance is at most $120k\alpha^2$.  

Let us assume without loss of generality that $i\leq j$ and hence $f_i-f_{i+1}\geq f_j-f_{j+1}$.  Now we can simplify and summarize the effect of the above coupled flips that change the Hamming distance.  
Since the clusters are active with probability $\alpha$, with probability $\leq \alpha (f_{j}-f_{j+1})$ we flip $S_{X_t}(w,B)$ and $S_{Y_t}(w,R)$ 
 and then the Hamming distance increases by $\leq (i+j-1)$.
Moreover, with probability $f_i-f_{i+1}-(f_j-f_{j+1})$ we flip $S_{X_t}(w,B)$ by itself and the Hamming distance increases by $i$.
Therefore, we have the following:
\begin{align*}
\Exp{H_c(X_{t+1},Y_{t+1})} 
&\leq \Exp{H(X_{t},Y_{t})} + \alpha\left[ (i+j-1)(f_j-f_{j+1}) + i((f_i-f_{i+1})-(f_j-f_{j+1}))\right] +  120k\alpha^2
\\ &=  
1 + \alpha\left[  i(f_i-f_{i+1}) + (j-1)(f_j-f_{j+1}) \right] +  120k\alpha^2
\\ &\leq  
1 + \alpha(5/6) +  120k\alpha^2 \hspace{1in} \mbox{by Property \ref{Flip:P1},}
\end{align*}
which completes the proof of the lemma when $d_c(v^*)=1$.

\subsection{Color Appears More Than Once}
\label{sec:colormult}

The analysis of the case when the color appears more than once, i.e., $d_c(v^*)>1$, follows the same general approach as in \cref{sec:coloronce} for the case $d_c(v^*)=1$.  In particular, we use the same high-level coupling as used by Vigoda~\cite{Vigoda} but in addition we use \cref{lem:flip-diff} to bound the probability that a cluster is flippable in one chain and the coupled cluster is not flippable in the other chain.  
We now detail the analysis for the case when $d_c(v^*)\geq 2$.

Suppose $d_c(v^*)>1$.  Let $w_1,\dots,w_d$ where $d=d_c(v^*)$ denote the neighbors of $v^*$ with color~$c$, i.e., $X_t(w_i)=Y_t(w_i)=c$ for all $1\leq i\leq d$. 
Let $c_X:=X_t(v^*)$ and $c_Y:=Y_t(v^*)$.  

\newcommand{\ahat}{\widehat{a}}
\newcommand{\ihat}{\widehat{i}}
\newcommand{\fhat}{\widehat{f}}
\newcommand{\Ahat}{\widehat{A}}
\newcommand{\shat}{\widehat{s}}
\newcommand{\that}{\widehat{t}}
\newcommand{\Shat}{\widehat{S}}
\newcommand{\That}{\widehat{T}}

In $X_t$, the clusters we are considering are $S_{X_t}(w_i,c_Y)$ for $1\leq i\leq d$ and $S_{X_t}(v^*,c)$; whereas in $Y_t$, we are considering $S_{Y_t}(w_i,c_X)$ for $1\leq i\leq d$ and $S_{Y_t}(v^*,c)$.
Denote these clusters and their sizes as follows:
\begin{align}
\label{cluster1}
\mbox{$(c,c_Y)$ clusters: } &  S_i=S_{X_t}(w_i,c_Y), \ \ a_i=|S_i|, \mbox{ and } \Shat=S_{Y_t}(v^*,c), \ \ A=|\Shat|; \\
\label{cluster2}
\mbox{$(c,c_X)$ clusters: } &  T_i=S_{Y_t}(w_i,c_X), \ \ b_i=|T_i|, \mbox{ and } \That=S_{X_t}(v^*,c), \ \ B=|\That|.
\end{align}
Finally, let $a_{\max}=\max_i a_i$ maximum size of these $(c,c_Y)$ clusters and let $i_{\max}$ be the lowest index $i$ where $a_i=a_{\max}$.  Similarly, let $b_{\max}=\max_j b_j$ and $j_{\max}$ denote the index $j$ for the first neighbor achieving the max $(c,c_X)$ cluster size.

The collection of clusters considered above might not be distinct.
In particular, when $d_c(v^*)>1$ then we might have $S_i=S_j$ (or similarly $T_i=T_j$) for some $i\neq j$, due to a $(c,c_Y)$ path between $w_i$ and $w_j$ in $X_t$.   In this case, assuming $i<j$, we keep $S_i$ as it is currently defined and we set $S_j=\emptyset$ with $a_j=0$; this avoids any double-counting in the coupling definition and analysis. 
Consequently, we have $A=1+\sum_i a_i$ and $B=1+\sum_j b_j$.  (The same issue arises in previous works and is dealt with in the same manner, see~\cite{Vigoda} and~\cite[Remark A.1]{CDMPP19}.)

Our coupling is the following:
\begin{enumerate}
    \item\label{s:big} With probability $f_{A}$, couple the flips of clusters $S_{i_{\max}}=S_{X_t}(w_{i_{\max}},c_Y)$ with $\Shat=S_{Y_t}(v^*,c)$.
    \item\label{t:big} With probability $f_{B}$, couple the flips of clusters $\That=S_{X_t}(v^*,c)$ with $T_{j_{\max}}=S_{Y_t}(w_{j_{\max}},c_X)$.
    \item Let $q_{i_{\max}} = f_{a_{i_{\max}}} - f_{A}$, and for all $i\neq i_{\max}$, let $q_i=f_{a_i}$.  These are the remaining flip probabilities for the $(c,B)$ clusters which might still have flip probabilities remaining.
      \item Let $q'_{j_{\max}} = f_{b_{j_{\max}}} - f_{B}$, and for all $j\neq j_{\max}$, let $q'_j=f_{b_j}$.
    \item For $1\leq i \leq d$, with probability $\min\{q_i,q'_i\}$, couple the flips of clusters $S_i=S_{X_t}(w_i,c_Y)$ with $T_i=S_{Y_t}(w_i,c_X)$.
    \item Couple the remaining flips independently.  In particular, for $1\le i\le d$, with probability $q_i-\min\{q_i,q'_i\}$, flip $S_i=S_{X_t}(w_i,c_Y)$ and with probability 
    $q'_i-\min\{q_i,q'_i\}$, flip $T_i=S_{Y_t}(w_i,c_X)$.
\end{enumerate}

 From case 1 we have that the Hamming distance increases by $A-a_{i_{\max}}-1$ with probability~$f_A$.  Similarly, from case 2 the Hamming distance increases by $B-b_{j_{\max}}-1$ with probability~$f_B$.  From case 5 the Hamming distance increases by $a_i+b_i-1$ with probability $\min\{q_i,q'_i\}$.  Finally, from case 6 the Hamming distance increases by $a_i$ with probability $q_i-\min\{q_i,q'_i\}$ and by $b_i$ with probability $q'_i-\min\{q_i,q'_i\}$.
Putting this all together, let $\Phi$ be defined as follows:
\begin{multline*}
\Phi   :=  (A-a_{i_{\max}}-1)f_{A} + (B-b_{j_{\max}}-1)f_{B} \\
 + \sum_{i=1}^{d} \left[(a_i + b_i - 1)\min\{q_i,q'_i\} + a_i(q_i -  \min\{q_i,q'_i\}) + b_i(q'_i-\min\{q_i,q'_i\})\right].
\end{multline*}
Then we have that the expected Hamming distance for updates involving color $c\in [k]$ where $d_c(v^*)\geq 1$:
\begin{equation}
    \Exp{H_c(X_{t+1},Y_{t+1})}
\leq 
1 + \alpha\Phi + 6(4+2d_c(v^*))20k\alpha^2,
\end{equation}
where the $20k\alpha^2$ term comes from \cref{lem:flip-diff} and the fact that there are at most 4 coupled pairs of flips for cases~\cref{s:big,t:big} and at most 2 clusters for each neighbor with color $c$.

We previously completed the analysis for the case $d_c(v^*)=1$.  We divide the remaining analysis based on whether $d_c(v^*)=2$ or $d_c(v^*)>2$.

\subsubsection{Color Appears Twice}
Consider the case $d_c(v^*)=2$.
Vigoda~\cite[Claim 6]{Vigoda} proved that when $d_c(v^*)=2$, then $\Phi$ is maximized for $b_1=b_2=1$ and $a_1=a_2=a\leq 3$.  We then have: \[ A=2a+1, B=3, i_{\max}=1, j_{\max}=2, q_1=f_a-f_{2a+1}, q_2=f_a, q'_1=f_1-f_3, q'_2=f_1.
\]
Since $f_1-f_3\geq f_a-f_{2a+1}$ for $1\leq a\leq 3$ then $q_i=\min\{q_i,q'_i\}$ for $1,2$.
Hence we have the following:
\begin{align*} 
\Phi & \leq (a+1)f_{2a+1} + f_3 + a(f_a-f_{2a+1}) + ((f_1-f_3)-(f_a-f_{2a+1})) + af_a + (f_1-f_a)
    \\ &= 2 + 2(a-1)f_a + f_{2a+1}
    \\ & \leq 2+2/3,
\end{align*}
where the second line follows because $f_1 = 1$ and the third line follows by Property \cref{Flip:P2}.
Therefore, for $d_c(v^*)=2$ we have that:
\[
    \Exp{H_c(X_{t+1},Y_{t+1})}
\leq 
(8/3)\alpha + 48\times 20k\alpha^2
\leq \alpha(8/3)(1+400k\alpha),
\]
which establishes the bound in~\cref{lem:disagree-effect} for the case $d_c(v^*)=2$.

\subsubsection{Color Appears More Than Twice}
By considering the two cases  $i_{\max}=j_{\max}$ and $i_{\max}\neq j_{\max}$, one can show, as in~\cite[Proof of Lemma 5, Case (iii)]{Vigoda}, that:
\begin{equation}
    \label{Phi:333}
\Phi \leq (A-2a_{\max})f_A + (B-2b_{\max})f_B
 + \sum_{i=1}^d \left(a_if_{a_i} + b_if_{b_i} - \min\{f_{a_i},f_{b_i}\}\right).
\end{equation}
For the chosen setting of $f_i$'s, we have that $(i-c)f_i<1/4$ for all $i\geq 1$ and all $c\geq 2$.  Hence, $A-2a_{\max}<1/4$ and
$B-2b_{\max}<1/4$. For each term in the summation, assuming without loss of generality $a_i\leq b_i$ and hence $f_{a_i}\geq f_{b_i}$ then:
\[ 
a_if_{a_i} + b_if_{b_i} - \min\{f_{a_i},f_{b_i}\}
= 
a_if_{a_i} + (b_i-1)f_{b_i}
\leq f_1 + 2f_3 = 4/3,
\]
where the inequality is by observation for our choice of $f_i$'s.
Plugging these bounds back into~\cref{Phi:333} we have:
\[
\Phi \leq \frac{1}{2} + \frac{4}{3}d_c(v^*) 
\leq \frac{11}{6}d_c(v^*)-1 \ \ \mbox{ when } d_c(v^*)\geq 3.
\]

Therefore, for $d_c(v^*)\geq 3$ we have that:
\begin{align*}
    \Exp{H_c(X_{t+1},Y_{t+1})}
& \leq 
 \alpha\left(\frac{11}{6}d_c(v^*) - 1\right)
+  6(4+2d_c(v^*))20k\alpha^2 
\\
& \leq 
 \alpha\left(\frac{11}{6}d_c(v^*)-1\right) + 240d_c(v^*)k\alpha^2 + 480k\alpha^2 
\\
& \leq 
\alpha\left(\frac{11}{6}d_c(v^*)-1\right)(1+400k\alpha),
\end{align*}
which matches the bound claimed in~\cref{lem:disagree-effect} for this case $d_c(v^*)\geq 3$.
This completes the proof of~\cref{lem:disagree-effect}.

\section{Proof of Fast Mixing below 11/6}
\label{sec:LP}

\cref{sec:weaker,sec:coupling} present the proof of \cref{thm:main-weaker} which establishes $O(\log{n})$ mixing time of the distributed flip dynamics when $k>(11/6+\eps)\Delta$ for all $\eps>0$.  The improved result for $k>(11/6-\eps^*)\Delta$ for a fixed $\eps^*>0$ as stated in \cref{thm:main} is proved in this section.
Chen, Delcourt, Moitra, Perarnau, and Postle~\cite{CDMPP19} proved rapid mixing of the flip dynamics in the sequential setting for this threshold. 
Our proof combines the analysis in~\cref{sec:weaker} with the LP approach of~\cite{CDMPP19}. 

The proof of \cref{thm:main} uses the new metric introduced in~\cite{CDMPP19}, which is a weighted Hamming distance.  In particular, in~\cite{CDMPP19} they identify the configurations on the local neighborhood of the disagree vertex $v*$ for which the coupling analysis is tight, these are referred to as extremal configurations.  Hence, for a pair of configurations $X_t,Y_t$ which differ at a single vertex $v^*$, let $\gamma$ denote the fraction of neighbors of $v^*$ in non-extremal configurations.  Then, \cite{CDMPP19} defines a new weighted Hamming distance as $\wHamming(X_t,Y_t) = 1 -\gamma\eta$ for an appropriately defined small constant $\eta>0$.  

Using this new weighting, \cite{CDMPP19} proves rapid mixing of the flip dynamics in the sequential setting for $k>(11/6-\eps^*)\Delta$.  The challenge in their analysis is that one has to consider the effect of coupled flips which do not change the Hamming distance but simply change whether some neighbors of $v^*$ are in extremal configurations.

To obtain \cref{thm:main} we combine the approaches of \cite{CDMPP19} with our analysis in \cref{sec:weaker,sec:coupling} of the effect of the distributed synchronization.  However the analysis becomes considerably more complicated than in \cite{CDMPP19} because multiple clusters in the neighborhood of $v^*$ can flip in a single step, this leads to many new cases where the new weighted Hamming distance can change.  We now present the detailed analysis.

We use the following flip probabilities from \cite[Observation 5.1]{CDMPP19}:
\begin{equation}
    p_1:=1, \ p_2:=\frac{185}{616}, \ p_3:=\frac{1}{6}, \ p_4:=\frac{47}{462}, 
p_5:=\frac{9}{154}, \ p_6:=\frac{2}{77}, \mbox{ and } p_j:=0 \mbox{ for all } j\geq 7.
\end{equation}
To avoid confusion we use $p_i$ when referring to this new setting and $f_i$ for the previous setting.

Let $\eps^* = 10^{-7}$; our final result will hold for any $k>(11/6 - \eps^*)\Delta$.
Our analysis is not tight and should hold for slightly larger $\eps^*$.

\subsection{General Path Coupling: Beyond Hamming}
\label{sec:general-pathcoupling}

Chen et al.~\cite{CDMPP19} improve upon Vigoda's 11/6 by changing the metric on $\Omega$ from Hamming distance to a weighted version of Hamming distance, where the weight depends on the configuration around the disagreement. We begin by introducing the more general form of path coupling which is needed for these purposes.

\begin{definition}\label{def:pre-metric}
A pre-metric on $\Omega$ is specified by a pair $(\Gamma, \wHamming)$ where $\Gamma$ is a connected graph $(\Omega, E_\Gamma)$ and $\wHamming$ is a function $\wHamming:E_\Gamma \rightarrow \R_{> 0}$.  Then, for all $(\sigma, \tau) \in \Omega^2$, let $\wHamming(\sigma, \tau)$ be the minimum weight among all paths from $\sigma$ to $\tau$ in $\Gamma$. 
\end{definition} 

The fact that the graph $\Gamma=(\Omega,E_{\Gamma})$ is connected is important. For all $(\sigma,\tau)\in E_{\Gamma}$ we will define and analyze the coupling. However, we will analyze our coupling with respect to a new metric.  We will define this metric by specifying the distance $\wHamming(\sigma,\tau)$ for all  $(\sigma,\tau)\in E_{\Gamma}$, and then we will extend $\wHamming$ to all pairs in $\Omega^2$ by using the shortest path distance in the weighted graph $\Gamma$; this is formulated in \cref{def:pre-metric}.

We will once again need to relax the state space from $\Omega$, the set of $k$-colorings, to $\Omega^*=[k]^V$, the set of $k$-labellings, as in \cref{sec:proof-weaker}.  We will define $E_\Gamma$ to be those pairs $\sigma,\tau\in\Omega^*$ that differ at exactly one vertex; this is the same subset of pairs considered in~\cref{thm:path-coupling}.
In the simple case where 
$\wHamming=1$ for all $(\sigma,\tau)\in E_\Gamma$, then the metric $\wHamming$ produced by the pre-metric $(\Gamma,\wHamming)$ is Hamming distance $\wHamming(\sigma,\tau) = H(\sigma,\tau)$.  We will consider a more complicated distance for pairs in $E_\Gamma$, as is done in~\cite{CDMPP19}.

We will use the following more general form of the path coupling theorem.

\begin{theorem}\cite{BubleyDyer,DyerGreenhill}
\label{thm:premetcoupling}
Let $(\Gamma, \wHamming)$ be a pre-metric on $\Omega$ where $\wHamming$ takes values in $[c,1]$ for some constant $c > 0$, and $\Gamma=(\Omega,E_\Gamma)$.  Extend~$\wHamming$ to the metric on $\Omega$ defined by~\cref{def:pre-metric}.

If there exists $\beta>0$ and, for all $(X_t,Y_t)\in E_\Gamma$ there exists a coupling so that:
\[ 
\Exp{\wHamming(X_{t+1},Y_{t+1})} \leq (1-\beta)\wHamming(X_t,Y_t),
\]
then the mixing time is bounded by
\[   
T_{mix} \leq O \left( \frac{\log(|V|)}{\beta} \right).
\]  
\end{theorem}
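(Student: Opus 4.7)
The plan is to reduce the statement to the standard path-coupling argument of Bubley and Dyer, now operating with respect to an arbitrary pre-metric rather than Hamming distance. First I would fix an arbitrary pair $(X_t,Y_t) \in \Omega\times\Omega$ and choose a minimum-weight path $X_t = Z_0, Z_1, \dots, Z_\ell = Y_t$ in $\Gamma$, so that by the shortest-path definition of the extended $\wHamming$ in \cref{def:pre-metric} we have $\sum_{i=0}^{\ell-1} \wHamming(Z_i, Z_{i+1}) = \wHamming(X_t, Y_t)$. Since every consecutive pair $(Z_i, Z_{i+1})$ lies in $E_\Gamma$, the hypothesis supplies a one-step coupling $(Z_i, Z_{i+1}) \mapsto (Z'_i, Z'_{i+1})$ whose expected $\wHamming$-distance contracts by a factor $1-\beta$. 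The heart of the path-coupling idea is to glue these per-edge couplings together along the path into a joint coupling of $(Z'_0,\dots,Z'_\ell)$ whose consecutive marginals match the given couplings; this can be done inductively by sampling $Z'_0$ from its chain marginal and then, for each $i$, drawing $Z'_{i+1}$ from its conditional distribution given $Z'_i$ under the provided per-edge coupling. By construction the endpoints form a valid coupling of $X_{t+1}$ and $Y_{t+1}$.

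With the glued coupling in hand I would apply the triangle inequality for $\wHamming$ termwise along the path,
\[
\wHamming(X_{t+1}, Y_{t+1}) \;\le\; \sum_{i=0}^{\ell-1} \wHamming(Z'_i, Z'_{i+1}),
\]
and take expectations. Using the per-edge hypothesis this gives
\[
\Exp{\wHamming(X_{t+1},Y_{t+1})} \;\le\; \sum_{i=0}^{\ell-1} (1-\beta)\,\wHamming(Z_i, Z_{i+1}) \;=\; (1-\beta)\,\wHamming(X_t, Y_t),
\]
and iterating for $t$ steps yields $\Exp{\wHamming(X_t,Y_t)} \le (1-\beta)^t\,\wHamming(X_0,Y_0)$.

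To convert this expected contraction into a mixing-time bound, I would invoke the hypothesis $\wHamming\in[c,1]$. Any minimum-weight path uses at most $|V|$ edges (disagreements can be removed one vertex at a time via a sequence of single-site changes, each an edge of $E_\Gamma$), and each edge has weight at most $1$, so the diameter satisfies $\wHamming(X_0, Y_0) \le |V|$; and for any $\sigma \ne \tau$ we have $\wHamming(\sigma, \tau) \ge c > 0$. Therefore
\[
\Prob{X_t \ne Y_t} \;\le\; \frac{1}{c}\,\Exp{\wHamming(X_t,Y_t)} \;\le\; \frac{|V|}{c}(1-\beta)^t,
\]
which is at most $1/4$ for $t = O(\log|V|/\beta)$. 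The standard coupling inequality then yields $\Tmix \le O(\log|V|/\beta)$.

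The main obstacle is the gluing step: producing a joint coupling of all $\ell+1$ copies from per-edge couplings in a way that is simultaneously consistent with every marginal transition of the underlying chain. This is the key insight of Bubley and Dyer, and it is what makes it legitimate to apply the triangle inequality inside the expectation to a single shortest-path decomposition. Once this gluing is in place, the remainder of the argument is just linearity of expectation together with the standard coupling-to-mixing translation, both of which go through verbatim in the weighted setting.
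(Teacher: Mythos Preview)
The paper does not actually prove this theorem: it is stated as a citation of \cite{BubleyDyer,DyerGreenhill} and used as a black box, so there is no in-paper proof to compare against. Your sketch is the standard path-coupling argument and is correct in substance. One small remark: the diameter bound $\wHamming(X_0,Y_0)\le |V|$ is not part of the abstract theorem statement (which speaks only of a general pre-metric), but follows from the specific choice of $E_\Gamma$ used in the paper (pairs of labellings differing at a single vertex); you have correctly noted where this step comes from.
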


\subsection{Defining the New Metric} 
Consider a pair of colorings~$X_t, Y_t\in\Omega$ that differ at a single vertex~$v^*$. Fix a color~$c$. Denote the neighbors of $v^*$ with color $c$ by $w_1,\dots,w_d$, where $d=d_c(v^*)$. 
Let $c_X=X_t(v^*)$ and $c_Y=Y_t(v^*)$.

Recall from~\cref{cluster1,cluster2}, the set of clusters we are considering and the notation for their sizes, which we repeat for convenience:
\begin{align*}
\mbox{$(c,c_Y)$ clusters: } &  S_i=S_{X_t}(w_i,c_Y), \ \ a_i=|S_i|, \mbox{ and } \Shat=S_{Y_t}(v^*,c), \ \ A=|\Shat|; \\
\mbox{$(c,c_X)$ clusters: } &  T_i=S_{Y_t}(w_i,c_X), \ \ b_i=|T_i|, \mbox{ and } \That=S_{X_t}(v^*,c), \ \ B=|\That|.
\end{align*}

Let a \emph{configuration} be a tuple of the form $(A, B; a, b),$
where $a = \left(a_1, \dots, a_{d_c(v^*)}\right)$ and $b = \left(b_1, \dots, b_{d_c(v^*)}\right)$.
We refer to $d=d_c(v^*)$ as the \emph{size} of a configuration~$(A, B, a, b)$.
See Figure~\ref{fig:config} for some examples. 
In general, let $(A,B; a, b) = (A^c, B^c; a^c, b^c)$ be the configuration to which a given color~$c$ belongs; we include the superscript $c$ when necessary for clarity.

\begin{figure}
    \centering
    \includegraphics[height=17em]{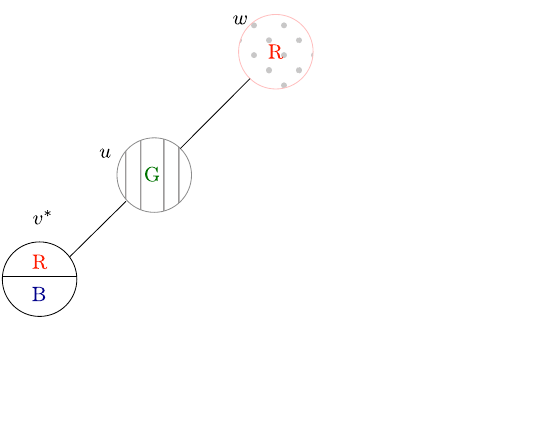}
    \includegraphics[height=17em]{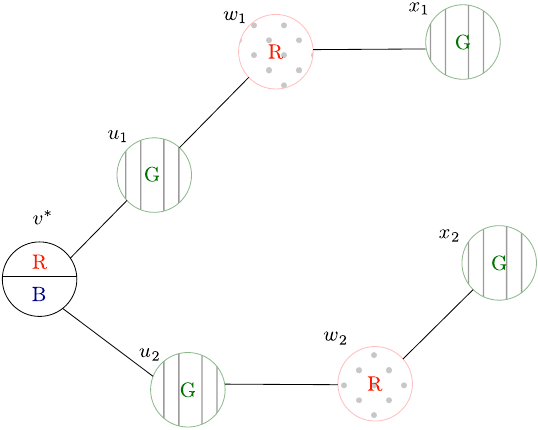}
    \caption{Examples of configuration $(3, 2; (2), (1))$ (on the left) and configuration $(7, 3; (3, 3); (1, 1))$ (on the right).}
    \label{fig:config}
\end{figure}

In Chen et al.~\cite{CDMPP19}, to prove their improved result, they showed, via a linear program (LP), that although the $k>11/6$ result is tight for the flip probabilities $\{f_i\}$ in Vigoda's original paper~\cite{Vigoda} (which are the same probabilities we used in~\cref{sec:weaker}), the analysis can be improved by weighting the single disagreement based on the configuration in the neighborhood of $v^*$.  In particular, for each color $c$ which appears in the neighborhood of $v^*$, we consider the clusters involving $c$, namely the sets $\SS_{X_t}(c)$ and $\SS_{Y_t}(c)$.  The only relevant information for these sets of clusters are the sizes of the clusters.  

In Vigoda's analysis there are six configurations of $\SS_{X_t}(c)$ and $\SS_{Y_t}(c)$ for which the analysis is tight; these are referred to as extremal configurations.  By choosing slightly different flip probabilities, Chen et al.~\cite{CDMPP19} reduce this to only two extremal configurations.  Note that all non-extremal configurations have some slack in Vigoda's analysis.
Moreover,~\cite{CDMPP19} showed that the extremal configurations are ``brittle'' and likely to flip to non-extremal configurations. 

For pairs $(\sigma,\tau)\in\Omega^2$ which differ at a single vertex $v^*$, whereas their Hamming distance is one,~\cite{CDMPP19} defined a new distance which is reduced by a factor $\eta$
for the fraction of colors that are in non-extremal configurations.  
\cite{CDMPP19} identified two specific configurations ( $(3, 2; (2), (1))$ and $(7, 3; (3, 3), (1, 1))$), which are the extremal configurations up to symmetry for their setting of flip probabilities.  

Consider a pair $\sigma,\tau\in\Omega$ which differ at a single vertex $v^*$. 
We denote the colors with these extremal configurations on the neighborhood of $v^*$ as follows:
\begin{align*}
C^1= C^1(v^*) &:= \{c: (A^c,B^c; a^c,b^c) \in\{ (3,2; (2), (1)),(2,3;(1),(2))\}\}
\\
C^2= C^2(v^*) &:= \{c: (A^c,B^c; a^c,b^c) \in\{ (7,3; (3,3), (1,1)),(3,7;(1,1),(3,3))\}\}.
\end{align*}
Denote the set of colors that do not appear in an extremal configuration with respect to $v^*$ as:
\[ C^0= C^0(v^*) := [k]\setminus (C^1\cup C^2).
\]

We denote the fraction of neighbors of $v^*$ that are in extremal configurations as:
\[\gamma(v^*)=\gamma_{\sigma,\tau}(v^*) := \left(|C^1| + 2|C^2|\right)/\Delta.
\]
Then, following \cite{CDMPP19}, we can define the following distance between this pair of states $\sigma,\tau$ which differ at a single vertex $v^*$:
\begin{equation}
\label{eq:weight}
\wHamming(\sigma, \tau) := 1 - \eta(1 - \gamma_{\sigma,\tau}(v^*)) \nonumber
\end{equation}
where~$1/2>\eta>0$ is a constant  
that we will define later. 

Note that~$\wHamming(\sigma, \tau)$ is defined when~$\sigma, \tau$ differ at a single vertex~$v^*$, and this is extended to a metric on all of $\Omega^2$ using~\cref{def:pre-metric}. 

The advantage of this refined metric is that in the case of the extremal configurations, a single move of the flip dynamics is likely to change extremal to non-extremal configurations, which reduces the distance by a factor $\eta$ even though the Hamming distance stays the same.  This yields some slack in the analysis of extremal configurations, which enables us to improve slightly on the 11/6 threshold, as is done in~\cite{CDMPP19}.

\subsection{Contraction of the New Metric}
We use the Path Coupling Theorem (\cref{thm:premetcoupling}) to prove~\cref{thm:main}. To this end, we need to show that the new metric is contracting.

Chen et al.~\cite{CDMPP19} showed that if we ignore the two extremal configurations then the maximum increase in Hamming distance per neighbor decreases from $\frac{11}{6}$ to $\frac{161}{88} = \frac{11}{6} - \frac{1}{264}$.  This is obtained by solving an appropriate LP. Using this insight, we prove the following key result.

\begin{lemma}
\label{lem:whamcontractlocal}
There exists a constant $\eps^*>0$ such that
for all $k \geq (\frac{11}{6} - \eps^*)\Delta$ the following holds. Consider~$X_t, Y_t\in\Omega^*$ which differ at a single vertex, $v^*$. 
Then
\[
\ExpCond{\wHamming(X_{t+1}, Y_{t+1})}{X_t, Y_t} \leq (1 - \beta) \wHamming(X_t, Y_t)
\]
where~$\beta > 0$ is a small constant.
\end{lemma}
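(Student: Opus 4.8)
The plan is to mimic the structure of the proof of \cref{thm:main-weaker} given in \cref{sec:proof-weaker}, but now track the refined weighted distance $\wHamming$ instead of plain Hamming distance. As before, by path coupling (\cref{thm:premetcoupling}) it suffices to analyze a pair $X_t,Y_t\in\Omega^*$ differing at a single vertex $v^*$, with the same identity-coupling convention on all clusters not touching $v^*$ and the same coupled-flip rules on the sets $\SS_{X_t}(c),\SS_{Y_t}(c)$ as in \cref{sec:coupling} (using the new probabilities $\{p_i\}$). The target is to show $\ExpCond{\wHamming(X_{t+1},Y_{t+1})}{X_t,Y_t}\le (1-\beta)\wHamming(X_t,Y_t)$ for a small $\beta>0$. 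Since $\wHamming(X_t,Y_t)=1-\eta(1-\gamma(v^*))$ is within $[1-\eta,1]$, and $\eta,\eps^*$ will be tiny, the contraction we prove will again be of order $\Theta(\eps^*\Delta\alpha)$ coming from the surplus of available colors, and I will choose $\alpha=\Theta(\eps^*/k)$ accordingly.

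The computation splits over the same three sources of expected change as in the weaker theorem, but each must now be expressed in terms of $\wHamming$. First, the ``good move'': with probability $\ge |A(v^*)|\alpha(1-O(k\alpha))$ the disagreement at $v^*$ is killed by a Glauber flip (\cref{lem:coupling-agree} applies verbatim), so $X_{t+1}=Y_{t+1}$ on $v^*$'s star and the contribution of that event to $\E[\wHamming]$ is $0$ rather than $\wHamming(X_t,Y_t)$. Second, the distance-$2$ leakage bounded by $\cref{lem:dist2-effect}$ contributes at most $300\Delta^2\alpha^2$ to the Hamming distance, hence at most that much to $\wHamming$ (weights are $\le 1$). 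Third, and this is the heart of the argument, the distance-$1$ flips involving a fixed color $c$ with $d_c(v^*)>0$: here I need a refined analogue of \cref{lem:disagree-effect} that accounts not only for the expected Hamming increase on $v^*$'s neighbors but also for whether those neighbors land in extremal configurations, since that governs the $\eta(1-\gamma)$ term of $\wHamming$ at the new disagreement(s). Concretely, I would prove a per-color bound of the form $\E[H_c(X_{t+1},Y_{t+1}) \text{ weighted}]\le \alpha\big(\tfrac{161}{88}d_c(v^*)-1+ (\tfrac{11}{6}-\tfrac{161}{88})\cdot(\text{indicator }c\in C^1\cup C^2)\cdot\text{stuff}\big)(1+O(k\alpha))$, i.e., the Chen--Delcourt--Moitra--Perarnau--Postle LP bound of $\tfrac{161}{88}$ per non-extremal neighbor, plus the extra $\tfrac{1}{264}$ per extremal neighbor which is then paid for by the $\eta$-discount those extremal colors enjoy in $\wHamming(X_t,Y_t)$, together with the fact that an extremal configuration is ``brittle'' — with constant probability a coupled flip moves it to a non-extremal configuration, decreasing $\gamma$ and hence increasing $\wHamming$, which must be controlled. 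Summing over $c$ and using $\sum_c d_c(v^*)=d(v^*)\le\Delta$ together with $|A(v^*)|\ge k-\tfrac{161}{88}\Delta+(\text{extremal correction})\ge \eps^*\Delta$ when $k>(\tfrac{11}{6}-\eps^*)\Delta$, the leading term becomes $-\Theta(\eps^*\Delta\alpha)$, dominating the $O(k^2\alpha^2+\Delta^2\alpha^2)$ error terms, which yields $\beta=\Theta(\eps^{*2})$ after setting $\alpha=\Theta(\eps^*/k)$, as in \cref{eqn:inter-thmweak}.

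The main obstacle is the bookkeeping in the third step: unlike in \cite{CDMPP19}, where exactly one cluster in the neighborhood of $v^*$ flips per step, here several of the coupled cluster-pairs at distance $1$ can flip simultaneously and independently, so the configuration $(A^c,B^c;a^c,b^c)$ of color $c$ — and of neighboring colors — can change in many combined ways, each of which can create or destroy an extremal configuration, changing $\gamma$. I would handle this by (i) observing that the activation/flip probabilities are all $O(\alpha)$, so the probability that two or more distinct cluster-pairs incident to $v^*$'s star both flip is $O(\Delta^2\alpha^2)$, which is absorbed into the error term; hence up to lower-order terms only single coupled flips matter, exactly the regime \cite{CDMPP19} analyze, so their per-configuration casework and LP certificate transfer, modulo the extra flippability-mismatch loss already quantified by \cref{lem:flip-diff} (contributing the $(1+O(k\alpha))$ factors); and (ii) carefully verifying the ``brittleness'' gain — that for $c\in C^1\cup C^2$ a single coupled flip reaches a non-extremal configuration with probability bounded below by an absolute constant (using $p_1=1$ and the specific sizes in $(3,2;(2),(1))$ and $(7,3;(3,3),(1,1))$) — so the $\eta$-term is genuinely recovered in expectation and not merely lost. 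Once these reductions are in place the remaining inequalities are the same LP-verified arithmetic as in \cite{CDMPP19} plus the elementary $\alpha$-order estimates already used in \cref{sec:weaker,sec:coupling}, and \cref{thm:main} follows from \cref{thm:premetcoupling}.
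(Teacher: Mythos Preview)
Your high-level plan is sound and matches the paper's: decompose the expected change in $\wHamming$ into the same three pieces (good Glauber move at $v^*$, distance-$2$ leakage, distance-$1$ coupled flips), use the \cite{CDMPP19} LP bound of $\frac{161}{88}$ per non-extremal neighbor and $\frac{11}{6}$ per extremal neighbor for the Hamming part, and recover the deficit on extremal colors via brittleness. The paper organizes this via the clean splitting $\nabla_{\wHamming}=\nabla_H-\nabla_B$, bounding $\nabla_H$ by an upgraded \cref{lem:disagree-effect} (their \cref{lem-improved:disagree-effect}) and bounding $\nabla_B$ directly through transition probabilities $E_{i,j}(c)$ between extremal classes.

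Where your proposal has a genuine gap is in the treatment of the brittleness term, i.e.\ what the paper calls $\nabla_B$. You propose to reduce to the sequential single-flip analysis of \cite{CDMPP19} by arguing that two or more \emph{coupled cluster-pairs} incident to $v^*$'s star simultaneously flipping is an $O(\Delta^2\alpha^2)$ event. But the flips that drive brittleness are \emph{not} the distance-$1$ coupled pairs of \cref{sec:coupling}: they are identity-coupled Glauber moves at vertices like $w\in N(u)\setminus\{v^*\}$ (distance $2$ from $v^*$) that change the configuration $(A^c,B^c;a^c,b^c)$ without touching the Hamming distance. There are $\Theta(k)$ such moves per extremal color, so the per-step brittleness gain is $\Theta(\alpha k)$, not an absolute constant as you write; and conversely the ``bad'' direction (a non-extremal color becoming extremal, increasing $\gamma$) is driven by a different set of $\Theta(k)$ flips. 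Your multi-flip bound does not cover these, and your sign is also reversed: extremal $\to$ non-extremal \emph{decreases} $\gamma$ and hence \emph{decreases} $\wHamming$, which is the gain you need, not a loss to be controlled.

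The paper handles this not by any reduction to \cite{CDMPP19} but by two dedicated lemmas proved directly for the distributed chain: \cref{lem:barnablocalneg} lower-bounds $\sum_c[E_{1,0}(c)+2E_{2,0}(c)]$ by $\alpha(k-\Delta-2)\gamma(v^*)(1-20k\alpha)^3$ via an explicit ``flip $w$ to a fresh color and nothing else moves nearby'' event, and \cref{lem:barnablocalpos} upper-bounds $\sum_c[E_{0,1}(c)+2E_{0,2}(c)+E_{1,2}(c)]$ by a lengthy case analysis (on $d_G$, on $d_{R,t}(u)+d_{B,t}(u)$, etc.) showing that at most $O(k)$ specific clusters can trigger each bad transition. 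These two lemmas, with $\eta=\delta\Delta/(34k)$, are what balance the $\delta\gamma(v^*)\Delta$ deficit in $\nabla_H$; your proposal does not supply a substitute for them.
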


\cref{thm:main} follows immediately from \cref{lem:whamcontractlocal}.

\begin{proof}[Proof of \cref{thm:main}]
If $k > 2\Delta$ then the claim follows from \cref{thm:main-weaker}.
Otherwise, combining \cref{lem:whamcontractlocal} with the path coupling~\cref{thm:premetcoupling}, we have 
\[
\Tmix \leq O\left(\frac{\log(|V|)}{\beta}\right)=O(\log{n}).
\]
This completes the proof of~\cref{thm:main}.
\end{proof}

\subsection{Contraction of the Hamming Metric}
We first show a bound on the contraction of our original Hamming metric with our flip probabilities in the distributed setting.

Let $\delta := \frac{11}{6} - \frac{161}{88}$.
We prove the following analogue of Corollary~5.1 in~\cite{CDMPP19}, which bounds the expected change in Hamming distance from coupled flips.  In the proof of \cref{lem:disagree-effect} we used that the expected change in the Hamming distance with respect to a specific color is $\leq 11/6$ for every neighbor.  In~\cite{CDMPP19} they prove that in all non-extremal neighborhoods the bound $11/6$ can be improved to $161/88=(11/6)-\delta$, while in extremal neighborhoods the bound remains at $11/6$.

The following lemma is the improved version of \cref{lem:disagree-effect}.

\begin{lemma}
\label{lem-improved:disagree-effect}
Let $\delta := \frac{11}{6} - \frac{161}{88}$.
\begin{multline}
\sum_{c\in [k]:\atop d_c(v^*)\geq 1}\Exp{H_c(X_{t+1},Y_{t+1})}
\leq 
\sum_{c \in C^0:\atop d_c(v^*)\geq 1} \alpha\left(\left(\frac{11}{6}-\delta\right)d_c(v^*)-1\right)(1+400k\alpha) 
\\
+ \sum_{c \in C^1\cup C^2} \alpha\left(\frac{11}{6}d_c(v^*)-1\right)(1+400k\alpha).
\end{multline}
\end{lemma}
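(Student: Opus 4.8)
\textbf{Proof proposal for \cref{lem-improved:disagree-effect}.}
The plan is to follow the same high-level coupling as in the proof of \cref{lem:disagree-effect} in \cref{sec:coupling}, but to replace the color-by-color bound $\frac{11}{6}d_c(v^*)-1$ by the improved bound $\frac{161}{88}d_c(v^*)-1$ whenever the configuration $(A^c,B^c;a^c,b^c)$ at color $c$ is non-extremal, i.e. when $c\in C^0$. For each color $c$ with $d_c(v^*)\geq 1$, the quantity $\Exp{H_c(X_{t+1},Y_{t+1})}$ was bounded in \cref{sec:colormult} by $1+\alpha\Phi + 6(4+2d_c(v^*))\,20k\alpha^2$, where $\Phi = \Phi(A^c,B^c;a^c,b^c)$ is the deterministic function of the cluster sizes defined there (and $1+\alpha\Phi$ should be read as $\alpha\Phi$ in the genuine-disagreement accounting, matching the form used in \cref{sec:coupling}). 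So it suffices to prove the sharpened per-color inequality $\Phi(A^c,B^c;a^c,b^c)\leq \frac{161}{88}d_c(v^*)-1$ for every non-extremal configuration, and to retain the earlier bound $\Phi\leq\frac{11}{6}d_c(v^*)-1$ for the (two families of) extremal configurations; summing over $c$ and packaging the $O(k\alpha^2)$ error terms into the $(1+400k\alpha)$ factor exactly as before then gives the stated two-term bound.

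The first step is therefore to recall the LP formulation of \cite{CDMPP19}: $\Phi$ is a maximum, over all valid cluster-size vectors $(a,b)$ of a given configuration size $d$, of an explicit linear-in-the-flip-probabilities expression (the same $\Phi$ written out in \cref{sec:colormult}, now with the flip probabilities $p_i$ from \cite[Observation 5.1]{CDMPP19} in place of the $f_i$). Chen et al.\ showed that with this choice of $p_i$ the LP value per neighbor is $\frac{161}{88}$ except at the two extremal configurations $(3,2;(2),(1))$ and $(7,3;(3,3),(1,1))$ (and their symmetric counterparts), where it equals $\frac{11}{6}$. Concretely I would: (i) redo the case split on $d_c(v^*)$ — $d_c(v^*)=1$, $=2$, and $\geq 3$ — exactly as in \cref{sec:coupling}, but now tracking the configuration; (ii) in each case invoke the corresponding claim of \cite{CDMPP19} (their Claims in the $d=1,2$ cases and the reduction leading to \cref{Phi:333} for $d\geq 3$) to get $\Phi\leq\frac{161}{88}d-1$ on non-extremal configurations and $\Phi\leq\frac{11}{6}d-1$ on the extremal ones; (iii) add the distributed error term $6(4+2d_c(v^*))20k\alpha^2\leq 240 d_c(v^*)k\alpha^2+480k\alpha^2$, absorbed into the factor $(1+400k\alpha)$ as in the proof of \cref{lem:disagree-effect}.

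The main obstacle I anticipate is not the arithmetic but making sure that the improved per-color bound is valid for \emph{the distributed coupling}, not merely the sequential one. In the distributed chain, $H_c(X_{t+1},Y_{t+1})$ can also be increased by a cluster that is flippable in one chain but not the other because of an overlapping or conflicting active cluster; in \cref{sec:coupling} this was precisely the source of the $20k\alpha^2$ terms of \cref{lem:flip-diff}. I need those correction terms to stay $O(d_c(v^*)k\alpha^2)$ uniformly over all configurations (so that they still fit inside the $(1+400k\alpha)$ factor) regardless of whether the configuration is extremal — which they do, since \cref{lem:flip-diff} is configuration-agnostic and there are still at most $4$ coupled pairs from cases \cref{s:big,t:big} and at most $2$ clusters per color-$c$ neighbor. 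A secondary subtlety is the bookkeeping for repeated clusters (when $S_i=S_j$ or $T_i=T_j$), which must be handled by the same ``set $a_j=0$'' convention of \cite{Vigoda} and \cite[Remark A.1]{CDMPP19}; this does not change $A,B$ or the LP, so the extremal-configuration classification is unaffected. Once these points are checked, summing the per-color inequalities over all $c$ with $d_c(v^*)\geq1$ and splitting the sum according to $c\in C^0$ versus $c\in C^1\cup C^2$ yields exactly the claimed inequality.
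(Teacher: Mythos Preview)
Your proposal is correct and follows essentially the same approach as the paper: invoke the per-color bound $\Phi\le\frac{161}{88}d_c(v^*)-1$ for non-extremal configurations (and $\le\frac{11}{6}d_c(v^*)-1$ for extremal ones) from \cite[Observation~5.2]{CDMPP19}, and then reuse the coupling and the $O(k\alpha^2)$ error accounting of \cref{lem:flip-diff} exactly as in the proof of \cref{lem:disagree-effect}. The paper's own proof is in fact briefer than your outline---it simply cites the CDMPP19 inequality and plugs it into the argument of \cref{lem:disagree-effect}---so the additional checks you flag (configuration-agnosticism of the distributed error terms, the repeated-cluster convention) are correct and only make the argument more complete.
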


 \begin{proof}
We restate \cref{lem:disagree-effect} for convenience:
\[
\sum_{c:d_c(v^*)>0} \Exp{H_c(X_{t+1},Y_{t+1})}
\leq 
\alpha\left(\frac{11}{6}d_c(v^*)-1\right)(1+400k\alpha).
\]
In \cite[see eq. (5.24) in Observation 5.2]{CDMPP19}
they show that $\Phi$ (which captures the expected change in the Hamming distance, from coupled flips, with respect to color $c$) is $\leq (\frac{11}{6}-\delta)=\frac{161}{88}$ for every non-extremal configuration, and is $\leq \frac{11}{6}$ for every extremal configuration. Applying this fact to the proof of \cref{lem:disagree-effect} gives
\begin{align}
   \lefteqn{
   \sum_{c\in [k]:\atop d_c(v^*)\geq 1} \Exp{H_c(X_{t+1},Y_{t+1})} }
   \\
   & \qquad = \sum_{c \in C^0} \Exp{H_c(X_{t+1},Y_{t+1})} + \sum_{c \in C^1 \cup C^2} \Exp{H_c(X_{t+1},Y_{t+1})} \nonumber \\
    &\qquad \leq 
    \sum_{c \in C^0:\atop d_c(v^*)\geq 1} \alpha\left(\left(\frac{11}{6}-\delta\right)d_c(v^*)-1\right)(1+400k\alpha) 
+ \sum_{c \in C^1\cup C^2} \alpha\left(\frac{11}{6}d_c(v^*)-1\right)(1+400k\alpha),
\nonumber
\end{align}
which proves the lemma.
\end{proof}

\subsection{Decomposing the New Metric}
To show that $\wHamming$ is contracting, we want to show that the expected change in this metric is bounded.
To this end, we decompose the expected change of $\wHamming$ into two parts: the expected change in $H$ and the expected change from our new weighting for colorings with extremal configurations.
For ease of notation, let $\sigma=X_t$ and $\tau=Y_t$ denote the states at time $t$, and let $\sigma'=X_{t+1}$ and $\tau'=Y_{t+1}$ denote the (random) states at time $t+1$.
We define
\begin{equation}
\label{eq:weighteddb}
B(\sigma, \tau) := H(\sigma, \tau) - \wHamming(\sigma, \tau)
\end{equation}
to denote the difference from Hamming distance, and note that $B$ is not a metric. Observe that~$\wHamming$ increases when either~$H$ increases or~$B$ \emph{decreases}. Thus, the idea now is to use the slack given by the non-extremal configurations (in \cref{lem-improved:disagree-effect}) to show that the expected \emph{increase} in the Hamming metric~$H(\sigma, \tau)$\textemdash whenever this increase is positive\textemdash is offset by the (sufficiently large) expected increase in~$B$, giving a net decrease in the metric~$\wHamming$. We do so by bounding the following division of the expected change:
\begin{align*}
\nabla_{\wHamming} = \nabla_{\wHamming}(\sigma, \tau) &:= \Exp{\wHamming(\sigma', \tau') - \wHamming(\sigma, \tau)},\\
\nabla_{H} = \nabla_H(\sigma, \tau) &:= \Exp{H(\sigma', \tau') - H(\sigma, \tau)} = \Exp{H(\sigma', \tau') - 1} \text{, and}\\
\nabla_B = \nabla_B(\sigma, \tau) &:= \Exp{B(\sigma', \tau') - B(\sigma, \tau))},
\end{align*}

Note that $\nabla_H$ is simply equal to the change in Hamming distance, and hence~\cref{lem-improved:disagree-effect} bounds~$\nabla_H$. Furthermore $\nabla_\wHamming = \nabla_H - \nabla_B$. Thus our aim is to bound $\nabla_H$ from above and bound $\nabla_B$ from below.

Recall that $C^1$ and $C^2$ are the colors appearing in extremal configurations,  
and $C^0$ is the set of remaining colors.
For each time step $t$ and and $i \in \{0,1,2\}$, let $C^i_{t} = C^i_{X_t,Y_t}(v^*)$ denote $C^i$ at time $t$.
Then for each color $c\in [k]$ and for all $i,j \in \{0,1,2\}$, let $\EE_{i,j}(c)$ denote the event that  $c \in C^i_{t}$ and $c\in C^j_{t+1}$, i.e., at time $t$ color $c$ is in $C^i$ and then after the coupled update color $c$ is in $C^j$.
 Moreover, let 
 \[ E_{i,j}(c) = \Exp{\indicator(\EE_{i,j}(c))}.
 \]
Then it follows that 
\begin{equation}
    \label{eq:nablaB-summing}
\nabla_B  \geq \frac{\eta}{\Delta}\sum_{c\in C^1_t\cup C^2_t} \left[E_{1,0}(c) + 2E_{2,0}(c) - E_{1,2}(c)\right] 
  - \frac{\eta}{\Delta} \sum_{c \in C^0_t} \left[E_{0,1}(c) + 2E_{0,2}(c)\right].
\end{equation}
(We ignore $E_{2,1}(c)$ since including it would only make the bound better.)

The following lemmas are the analog of \cite[Lemma 5.2]{CDMPP19}; we both simplify their case analysis and handle some subtleties that arise from the distributed setting. 
The first lemma lower bounds the number of colors moving from extremal to non-extremal; these correspond to good moves as they decrease the distance.

\begin{lemma}
\label{lem:barnablocalneg}
\begin{equation}
\label{eq:barnablocalneg}
 \frac{1}{\Delta} \sum_{c \in C} \left[E_{1,0}(c) + 2E_{2,0}(c) \right] \geq 
\alpha(k-\Delta-2)\gamma(v^*)(1 - 20k\alpha)^3.
\end{equation}
\end{lemma}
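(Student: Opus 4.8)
The plan is to lower bound, for each color $c$ appearing in an extremal configuration in the neighborhood of $v^*$, the probability $E_{1,0}(c) + 2E_{2,0}(c)$ that a single coupled step destroys that extremal configuration. The key structural observation, borrowed from \cite{CDMPP19}, is that the extremal configurations $(3,2;(2),(1))$ and $(7,3;(3,3),(1,1))$ are ``brittle'': a Glauber-type recoloring at a suitable vertex \emph{inside} one of the relevant clusters changes the cluster sizes and therefore the configuration type away from extremal. Concretely, for $c\in C^1$ the single neighbor $w_1$ with $X_t(w_1)=c$ has a cluster $S_1$ of size $2$ (in $X_t$) containing a vertex $u\neq w_1$; recoloring $u$ (via flipping a singleton cluster $\{u\}$ to some color not in $N(u)$) shrinks $S_1$ and moves $c$ out of the extremal set. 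Similarly for $c\in C^2$ there are two neighbors, each in a size-$3$ cluster, and recoloring an interior vertex of either cluster suffices; this is where the factor $2$ on $E_{2,0}(c)$ comes from (there are two independent ``exit routes'', one per neighbor).

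First I would make precise, for a fixed such color $c$ and fixed interior vertex $u$ of the relevant size-$2$ or size-$3$ cluster, that there is an available color $c'\notin (X_t\cup Y_t)(N(u))$ to which $u$ can be recolored — this uses $k\ge\Delta+2$, giving at least $k-\Delta-2 \geq k - \Delta - 2$ (more carefully, $k-\deg(u)\ge k-\Delta\ge 2$) choices, and in fact $\ge k-\Delta-2$ such colors that are simultaneously ``new'' and do not collide; this explains the $(k-\Delta-2)$ factor. Since $u$ agrees in both chains (it is not $v^*$), the cluster $\{u\}$ with target color $c'$ is the same in both chains, so we use the identity coupling: it is active with probability $\alpha$, flippable in both chains with probability at least $(1-\alpha)^{O(k+\Delta)}\ge (1-20k\alpha)$-ish after the standard union bound over $\le 2\Delta$ neighboring clusters sharing a color and $\le k-1$ overlapping clusters at $u$, and then flipped with probability $f_1=1$. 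One must also ensure that this flip is not ``cancelled'' by a simultaneous flip of the very cluster we are trying to shrink, which contributes another $(1-20k\alpha)$ factor via \cref{lem:flip-diff}; across the (at most three) clusters whose inactivity we need, this yields the $(1-20k\alpha)^3$ term. Summing over $c\in C^1$ contributes $\ge \alpha(k-\Delta-2)(1-20k\alpha)^3$ per color, and over $c\in C^2$ contributes the same with a factor $2$, and dividing by $\Delta$ and recalling $\gamma(v^*)=(|C^1|+2|C^2|)/\Delta$ gives exactly the claimed bound.

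The step I expect to be the main obstacle is handling the ``independence'' bookkeeping cleanly: the events $\EE_{1,0}(c)$ for different colors $c$, and the two exit routes for a single $c\in C^2$, are not independent (they may compete for activation of nearby clusters, and a recoloring for color $c$ could in principle interact with the neighborhood structure for a different color $c'$). The resolution is that we are only lower-bounding $E_{1,0}(c)+2E_{2,0}(c)$ by the probability of one \emph{specific} designated good event per exit route — a single designated vertex $u=u(c)$ and a single designated target color $c'=c'(c)$ — and then using linearity of expectation over $c$, so no independence across colors is needed; within a single designated event the only randomness is the identity-coupled activation/flip of $\{u\}$ together with the inactivity of a bounded list of interfering clusters, which is handled by a union bound exactly as in the proof of \cref{lem:coupling-agree} and \cref{lem:flip-diff}. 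Care is needed to check that the designated recolored vertex $u(c)$ indeed lies in the cluster whose shrinkage changes the configuration type for $c$ (and that after shrinking, the new configuration is genuinely in $C^0$, not accidentally the \emph{other} extremal type), which is a finite case check over the two extremal configurations and their symmetric images.
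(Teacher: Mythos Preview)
Your approach matches the paper's: designate a vertex whose singleton-cluster recoloring destroys the extremal configuration, count the $\ge k-\Delta-2$ available target colors, and bound the probability that the designated flip occurs while nothing else interferes. Two points need tightening.

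First, the factor $2$ on $E_{2,0}(c)$ is not produced by ``two exit routes''; it is simply the coefficient in the sum, matching the weight $d_c(v^*)=2$ in $\gamma(v^*)=(|C^1|+2|C^2|)/\Delta$. You need only show $E_{2,0}(c)\ge\alpha(k-\Delta-2)(1-20k\alpha)^3$ via a \emph{single} designated event, exactly the same bound as for $E_{1,0}(c)$. The paper's designated vertex for $c\in C^2$ also differs from yours: it recolors one of the two $c$-colored neighbors $u_1\in N(v^*)$ (not an interior vertex of a size-$3$ cluster), after which $d_c$ drops to $1$ and the surviving cluster at $u_2$ contains $\{v^*,u_2,w_1,w_2\}$ and hence has size $\ge 4$, which is immediately non-extremal and avoids the case analysis your choice would require.

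Second, and more substantively, ``at most three clusters whose inactivity we need'' undercounts the interfering events. For the designated recoloring to actually place $c$ in $C^0_{t+1}$ you must also rule out, in the same distributed round: $v^*$ being recolored in either chain ($\le 2k$ clusters), any other neighbor of $v^*$ acquiring color $c$ ($\le 2\Delta$ clusters), and any other neighbor of the $c$-neighbor of $v^*$ acquiring color $c_X$ or $c_Y$ ($\le 4\Delta$ clusters). These $O(k)$ competing activations are precisely the paper's events (ii)--(iv); their joint non-occurrence has probability $\ge 1-10k\alpha$ by a single union bound, and this, together with flippability of the designated singleton in both chains, is what furnishes the $(1-20k\alpha)^3$ factor (with slack). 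Your ``finite case check over the two extremal configurations'' covers only the static implication that the designated flip \emph{alone} yields a non-extremal configuration; it does not address these dynamic competing flips, which are the genuinely new ingredient in the distributed setting.
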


\begin{proof}
We begin by lower bounding $E_{1,0}(c)$.
Let $c \in C^1_t$.
There exists a unique $u \in N(v^*)$ where $X_t(u)=Y_t(u)=c$, and a unique $w \in N(u)\setminus\{v^*\}$
where $X_t(w) = Y_t(w) \in \{X_t(v^*),Y_t(v^*)\}$. 
If all of the following events occur then 
$c\in C^0_{t+1}$, namely:
\begin{enumerate}[(i)]
\item $X_{t+1}(w)=Y_{t+1}(w)\notin \{X_t(v^*),Y_t(v^*)\}$;
\item $X_{t+1}(v^*)=X_t(v^*), Y_{t+1}(v^*)=Y_t(v^*)$;
\item For all $u'\in N(v^*)\setminus\{u\}$, $X_{t+1}(u')\neq c, Y_{t+1}(u')\neq c$;
\item For all $w'\in N(u)\setminus\{w,v^*\},$ $X_{t+1}(w')\notin\{X_t(v^*),Y_t(v^*)\}, Y_{t+1}(w')\notin\{X_t(v^*),Y_t(v^*)\}$.
\end{enumerate}
We will now show a lower bound on the probability that all of these events (i)-(iv) occur in the transition $(X_t,Y_t)\rightarrow (X_{t+1},Y_{t+1})$.

Beginning with (i), there are at least $k - \Delta - 2$ colors that are not in $\{X_t(v^*),Y_t(v^*)\}$ and are not in the neighborhood of $w$ in both $X_t$ and $Y_t$.
For each of these colors $c'$, the probability of flipping the singleton cluster $S_{X_t}(w, c') = S_{Y_t}(w, c')$ is at least $\alpha(1 - 20k\alpha^2)^2$ by \cref{lem:flip-diff}.
Hence, the probability of $w$ changing to a color not in $\{X_t(v^*),Y_t(v^*)\}$ is at least $\alpha(1 - 20k\alpha^2)^2(k-\Delta-2)$.

For event (ii) to not occur then $v^*$ must be recolored in at least one of the chains $X_t$ or $Y_t$.  In each chain, there are $\leq k$ clusters that recolor $v^*$, and hence the probability of event (ii) not occurring is $\leq 2\alpha k$. 
Considering event (iii), the probability of recoloring a specific vertex $u'$ to color $c$ is $\leq \alpha P_1 = \alpha$ and hence the probability of changing at least one $u\in N(v^*)\setminus\{u\}$ to the specific color $c$ in at least one of the chains $X_t,Y_t$ is $\leq 2\alpha\Delta$.  
Finally, the event (iv) is similar to (iii) but there are now two colors to exclude, and hence the probability of (iv) not occurring is $\leq 4\alpha\Delta$. 

Putting these observations together we have that the probability that at least one of the events (ii), (iii), or (iv) does not occur
is $\leq \alpha(4k+2\Delta+4\Delta)< 10k\alpha$.  Thus, the probability that (ii), (iii), and (iv) all occur 
is $\geq (1-10k\alpha)$.
Hence for $c\in C^1_t$ we have that 
\[ E_{1,0}(c)\geq \alpha(k-\Delta-2)(1 - 20k\alpha)^3.
\] 

Consider the case when $c \in C^2_t$.  
There exists a unique pair $u_1,u_2$ where $X_t(u_1)=X_t(u_2)=Y_t(u_1)=Y_t(u_2)=c$, and there exists $w_1\in N(u_2)\setminus\{v^*\}$ where $X_t(w_1)=Y_t(w_1)\in\{X_t(v^*),Y_t(v^*)\}$.   
Moreover, exactly one of the following holds: there is a unique $w_2\in N(u_2)\setminus\{v^*,w_1\}$ where $X_t(w_2)=Y_t(w_2)=X_t(w_1)=Y_t(w_1)$, or there is a unique $w_2\in N(w_1)\setminus\{v^*,u_2\}$ where $X_t(w_2)=Y_t(w_2)=c$.

The analogous events that ensure $c\in C^0_{t+1}$ are the following:
\begin{enumerate}[(i')]
\item $X_{t+1}(u_1)=Y_{t+1}(u_1)\neq c$;
\item $X_{t+1}(v^*)=X_t(v^*), Y_{t+1}(v^*)=Y_t(v^*)$;
\item For all $u'\in N(v^*)\setminus\{u_1,u_2\}$, $X_{t+1}(u')\neq c, Y_{t+1}(u')\neq c$;
\item $X_{t+1}(u_2)=X_t(u_2)=Y_{t+1}(u_2)=Y_t(u_2)=c$;
\item $X_{t+1}(w_i)=X_t(w_i)=Y_{t+1}(w_i)=Y_t(w_i)$ for $i\in\{1,2\}$.
\end{enumerate}
Note that if these events (i')-(v') all occur then $u_2$ is the only neighbor of $v^*$ with color $c$ at time $t+1$ and one of the clusters $S_{X_{t+1}}(u_2,X_t(v^*))$ or $S_{Y_{t+1}}(u_2,Y_t(v^*))$ is of size $\geq 4$ as it contains $v^*,u_2,w_1,w_2$, and hence color $c$ is non-extremal at time $t+1$ which means $c\in C^0_{t+1}$ as desired.  

Event (i') is analogous to event (i) from earlier and occurs with probability $\geq \alpha(1 - 20k\alpha^2)^2(k-\Delta-2)$.  Events (ii') and (iii') are identical to (ii) and (iii) (except one additional excluded vertex in (iii')) and hence the probability of these not occurring is $\leq \alpha(2k+2\Delta)$.  Events (iv') and (v') are analogous to (ii) and hence the probability of these not occurring is $\leq 6\alpha k$ since there are three vertices $u_2,w_1,w_2$ under consideration and $\leq 2\Delta$ clusters to consider for each of these vertices.  

Combining these bounds we have for $c\in C^2_t$ that 
\[ E_{2,0}(c)\geq \alpha(k-\Delta-2)(1 - 20k\alpha)^3.
\]
This completes the proof of the lemma.
\end{proof}

The next lemma upper bounds the bad moves, these are the colors that change from non-extremal to extremal and hence increase the distance.

\begin{lemma}
\label{lem:barnablocalpos}
\[
\frac{1}{\Delta} \sum_{c \in C} \left[E_{0,1}(c) + 2E_{0,2}(c) + E_{1,2}(c) \right] \leq 17\alpha k(1-\gamma(v^*))  + \alpha ({31}/{100}) \Delta \gamma(v^*) 
\]
\end{lemma}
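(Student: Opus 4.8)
The plan is to bound, for each color $c$ that could contribute to the bad-move sum, the probability that $c$ transitions from non-extremal to extremal (the $E_{0,1}(c)$ and $E_{0,2}(c)$ terms) or from one extremal type to the other ($E_{1,2}(c)$), and then sum these contributions separately over $c \in C^0_t$ and over $c \in C^1_t \cup C^2_t$, matching the two terms on the right-hand side. The key structural observation is that for a color $c$ to become extremal (i.e., land in $C^1_{t+1}$ or $C^2_{t+1}$) starting from a non-extremal configuration, at least one cluster in the neighborhood of $v^*$ involving color $c$ (or one of the colors $c_X = X_t(v^*)$, $c_Y = Y_t(v^*)$) must actually flip in this step — since extremal configurations are the specific rigid tuples $(3,2;(2),(1))$ and $(7,3;(3,3),(1,1))$ (up to symmetry), and a non-extremal neighborhood can only reach one of these by a cluster flip that changes the cluster-size profile around $v^*$. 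Each such flip happens only if the relevant cluster is active, which costs a factor $\alpha$.

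First I would handle the $C^0_t$ colors. For a color $c \in C^0_t$ to move into $C^1_{t+1} \cup C^2_{t+1}$, a flip must occur on a cluster touching $v^*$'s neighborhood that involves color $c$ or color $c_X$ or $c_Y$; I would enumerate these: the clusters $S_{X_t}(w,c_X), S_{X_t}(w,c_Y), S_{Y_t}(w,c_X), S_{Y_t}(w,c_Y)$ for neighbors $w$ of $v^*$ with $X_t(w)=c$, plus clusters involving $v^*$ itself. There are $O(k)$ such clusters total across all $c \in C^0_t$ that aren't already forced by an extremal structure, and each contributes probability $O(\alpha)$, giving the $17\alpha k(1-\gamma(v^*))$ term after accounting for the at-most-constant number of cluster types per relevant neighbor and summing $d_c(v^*)$ over $c \in C^0$ which is at most $(1-\gamma(v^*))\Delta$ (roughly). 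I would lean on \cref{lem:flip-diff} to absorb the ``flippable in one chain but not the other'' subtleties into lower-order terms that are dominated by the stated constant $17$.

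Next I would handle the $C^1_t \cup C^2_t$ colors, which produce the $\alpha(31/100)\Delta\gamma(v^*)$ term. Here the count of such colors is controlled by $|C^1| + 2|C^2| = \Delta\gamma(v^*)$, so I need a per-color bound of roughly $31/100$ on $E_{0,1}(c) + 2E_{0,2}(c) + E_{1,2}(c)$ — but for $c$ already extremal, only $E_{1,2}(c)$ is relevant (the $E_{0,\cdot}$ terms vanish). For an extremal configuration of type $C^1$ to flip to type $C^2$ (or a $C^2$ to a $C^1$), the specific small clusters defining the extremal tuple must flip in a coordinated way; I would bound the probability of the required flip(s) by the product of the activation probability $\alpha$ and the appropriate flip probability $p_i$ from the Chen et al.\ setting, and check that the worst case over the (finitely many) extremal-to-extremal transitions gives the constant $31/100$. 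The $(1+O(k\alpha))$-type corrections from asymmetric flippability again go into lower-order terms.

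The main obstacle I expect is the careful bookkeeping of which clusters, when flipped (possibly simultaneously — the essential new difficulty in the distributed setting), actually change the configuration type of color $c$, and ensuring no double-counting: in the distributed dynamics multiple clusters around $v^*$ flip in one step, so a color could in principle be pushed into an extremal configuration by the combined effect of several flips rather than a single one, and I must verify that each such event still carries at least one factor of $\alpha$ from an activation and that the union bound over all the relevant (color, cluster-tuple) events closes with the stated constants $17$ and $31/100$. Getting those constants — rather than just ``$O(1)$'' — requires tracking the number of cluster types per neighbor ($\leq 2$ per color per chain) and the number of colors $c_X, c_Y$ involved, and checking the extremal-configuration arithmetic explicitly against the $p_i$ values; this is where I would follow the structure of \cite[Lemma 5.2]{CDMPP19} most closely while inserting the extra cases for simultaneous flips.
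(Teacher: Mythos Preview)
Your high-level plan (bound the transition probability per color by a union bound over clusters that must flip, then sum) matches the paper's, but the structural claim it rests on is wrong, and that claim is where all the actual work lives. You assert that for $c$ to become extremal, ``a cluster touching $v^*$'s neighborhood involving color $c$ or one of $c_X, c_Y$'' must flip. The paper's proof shows this is not the right set of clusters: it proceeds by a long case analysis on $d_c(v^*)$ at time $t$ (values $0$, $1$, $2$, $\geq 3$), then on the number of $c_X/c_Y$-colored neighbors of the $c$-neighbor(s) of $v^*$, and then one or two hops further out. In several branches the cluster that is forced to flip lives at a vertex at distance up to $3$ from $v^*$ and can involve \emph{any} of the $k$ colors --- e.g., when $d_c=1$, the unique $c$-neighbor $u$ has exactly one $c_X$-or-$c_Y$ neighbor $w\neq v^*$, and $w$ has a $c$-colored neighbor $z$, then for $c$ to land in $C^1_{t+1}$ the vertex $z$ must recolor away from $c$, which can use any of $k-1$ clusters at $z$. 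The constants $2k+3\Delta$ (for $E_{0,1}$) and $3k+3\Delta$ (for $E_{0,2}$), and hence the $17\alpha k$, come entirely from this case enumeration; your proposed inventory of $c/c_X/c_Y$-clusters inside $N(v^*)$ would miss these terms. Relatedly, the $(1-\gamma(v^*))$ factor does not come from $\sum_{c\in C^0} d_c(v^*)$ as you suggest: the per-color bound is $O(\alpha k)$, not $O(\alpha\, d_c(v^*))$.

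Your treatment of $E_{1,2}(c)$ also misses the specific mechanism. The constant $31/100$ is not a generic $p_i$: it is exactly $p_2 = 185/616$ plus a lower-order $O(\alpha)$ term. The paper observes that moving from the $(3,2;(2),(1))$ configuration to the $(7,3;(3,3),(1,1))$ one forces $d_c(v^*)$ to grow from $1$ to $2$, so some neighbor $u'\neq u$ of $v^*$ must be recolored to $c$. Either this happens via a cluster of size $\geq 2$ (at most $\Delta$ candidates, each flipping with probability $\leq \alpha p_2$), or via a size-$1$ cluster at $u'$ \emph{together with} a second flip elsewhere to rebuild the extremal structure (probability $O(\Delta^2\alpha^2)$). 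This two-flip branch is the only place simultaneous distributed updates enter the argument, and it contributes a lower-order term rather than a difficulty. Finally, \cref{lem:flip-diff} plays no role here; the whole proof is a plain union bound over single-cluster flips.
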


\begin{proof}
Fix a color $c$ and denote it as $G$.
Suppose $G \in C_t^0$.
Also denote $X_{t}(v^*)$ as $R$ and $Y_{t}(v^*)$ as~$B$.
It follows that $R \neq B$.

We begin by upper bounding $E_{0,1}(G)$.
We will show that if $G \in C_{t+1}^1$ then one of $\leq 2k+3\Delta$ clusters must flip at step $t$. 
Since each cluster flips with probability at most $\alpha p_1 = \alpha$, it follows that $E_{0,1}(c) \leq \alpha(2k + 3 \Delta) \leq 5 \alpha k$.

We start by considering the case that $v^*$ does not change ($X_t(v^*) = X_{t+1}(v^*)$ and $Y_t(v^*) = Y_{t+1}(v^*)$).
In this case, we will show that one of at most $\max\{\Delta, 2k+\Delta, 2k\} = 2k+\Delta$ clusters must flip for $G \in C^1_{t+1}$ to occur. 
We consider three cases depending on $d_G$ at step $t$:
\begin{enumerate}
    \item Suppose $d_G = 0$ at step $t$. \label{lem:barnablocalpos:case:1}
    For $G \in C^1_{t+1}$, there must be a new $G$ neighbor of $v^*$.  Each neighbor of $v^*$ has probability at most $\alpha p_1=\alpha$ of flipping to the specific color $G$, hence $E_{0,1}(c)\leq \Delta$ in this case.
        
    \item Suppose $d_G = 1$ at step $t$.
    We will show that in this case, one of at most $\Delta + \max\{2\Delta, k, 2k \} = 2k + \Delta$ clusters must flip for $G \in C^1_{t+1}$. 
        Let $u_t$ and $u_{t+1}$ be the unique $G$ neighbor of $v^*$ at step $t$ and $t+1$, respectively.
        
    Suppose $u_t\neq u_{t+1}$.
    Then $u_{t+1}$ recolored to $G$ at time $t$.
    There are at most $\Delta$ choices for $u_{t+1}$ and at most one cluster that when flipped recolors this specific $u_{t+1}$ to $G$, and hence the expected contribution to $E_{0,1}(c)$ is at most $\Delta$ in this case.
        
    Now suppose $u_t=u_{t+1}$ and denote it by $u=u_t=u_{t+1}$.
    We break this case into three subcases that depend on $u$'s neighborhood.  For $u\in N(v^*)$, integer $s\geq 0$ and color $c'$, let 
    \[ d_{c',s}(u) := |\{w\in N(u)\setminus\{v^*\}: X_s(w)=c' \mbox{ or } Y_s(w)=c'\}|.
    \]
    Note, $d_{R,t+1}(u)+d_{B,t+1}(u)=1$ since $G\in C^1_{t+1}$.  We now partition the cases based on the number of $R$ and $B$ neighbors of $u$ at time $t$.
    \begin{enumerate}
        \item Suppose $d_{R,t}(u) + d_{B,t}(u) = 0$.
        Since $d_{R,t+1}(u)+d_{B,t+1}(u)=1$, as pointed out earlier, a neighbor of $v^*$ must flip to $R$ or $B$ at step $t$ and thus $E_{0,1}(c)\leq \Delta + 2\Delta = 3 \Delta$ in this case (the $\Delta$ term comes from the case $u_t\neq u_{t+1}$).
        
        \item Suppose $d_{R,t}(u) + d_{B,t}(u) = 1$.
        Let $w_t$ and $w_{t+1}$ be the unique $R$ or $B$ neighbor of $u$ at step~$t$ and $t+1$, respectively.
    
        Suppose $w_t\neq w_{t+1}$.    Thus $w_{t+1}$ recolored to $R$ or $B$ at step~$t$.  There are at most $2\Delta$ clusters to consider and hence $E_{0,1}(c)\leq \Delta + 2\Delta = 3 \Delta$ in this case.

        Now suppose $w_t=w_{t+1}$.
        Since $G \in C^0_{t}$ and $w$ is the only $R$ or $B$ neighbor of $u$ (ignoring $v^*$), then there exists at least one neighbor $z \in N(w) \setminus\{u,v^*\}$ where $X_{t}(z) = Y_{t}(z) = G$.
        Since $G \in C^1_{t+1}$, then $z$ must recolor to a non-$G$ color. Since at least one of $\leq k$ clusters containing $z$ must flip at time $t$, hence $E_{0,1}(c)\leq \Delta +2k $ in this case.

        \item Suppose $d_{R,t}(u) + d_{B,t}(u) \geq 2$.
        For $G \in C^1_{t+1}$, a $R$ or $B$ neighbor of $u$ (that is not $v^*$) must flip to not $R$ or $B$ at step $t$.
        Fix any $w_1, w_2 \in N(u) \setminus \{v^*\}$ distinct $R$ or $B$ neighbors of $u$ at step $t$.
        It follows that $w_1$ or $w_2$ must flip to not $R$ or $B$. 
        For each $w_i \in \{w_1,w_2\}$, there are at most $k-2$ clusters that contain $w_i$ and flip it to a color that is not $R$ or $B$, and hence $E_{0,1}(c)\leq \Delta + 2k$ in this case.
    \end{enumerate}
     
    \item Suppose $d_G \geq 2$ at step $t$. \label{lem:barnablocalpos:case:3}
Consider any two $G$ neighbors of $v^*$ at time $t$, denote these as $u_1,u_2$.  Since there is exactly one $G$ neighbor of $v^*$ at time $t+1$ thus at least one of these two vertices $u_1,u_2$ must recolor to a different color at step~$t$.  Hence, at least one of $\leq 2k$ clusters need to flip to recolor one of these neighbors, and thus $E_{0,1}(c)\leq 2k$ in this case.
\end{enumerate}

Finally, we consider the case when $v^*$ changes (i.e., $X_{t+1}(v^*) \neq R$ or $Y_{t+1}(v^*) \neq B$).
For $G \in C^1_{t+1}$ then $X_{t+1}(v^*) \neq Y_{t+1}(v^*)$.
It follows that in exactly one chain, a cluster containing $v^*$ and one of its neighbors must flip.
That is, $v^*$ must flip to the color of a neighbor in one chain. 
Since $v^*$ has $\Delta$ neighbors, there are at most $\Delta$ clusters that flip $v^*$ to a color in its neighborhood in one chain,
thus, one of at most $2\Delta$ clusters must flip for $G \in C^1_{t+1}$, and hence $E_{0,1}(c)\leq 2\Delta$ in this case.

We now want to bound $E_{0,2}(c)$. 
Similar to the previous case, we will show that if $G \in C_{t+1}^2$, one of $3k+3\Delta$ clusters must flip at step $t$. 
Since each cluster flips with probability at most $\alpha p_1 = \alpha$, it follows that $E_{0,2}(c) \leq \alpha(3k + 3 \Delta) \leq 6 \alpha k$.

We start by assuming that $v^*$ does not change (i.e., $X_t(v^*) = X_{t+1}(v^*)$ and $Y_t(v^*) = Y_{t+1}(v^*)$).
In this case, we will show that one of at most $\max\{\Delta, 3k+\Delta, 3k\} = 3k+\Delta$ clusters must flip for $G \in C^2_{t+1}$. 
We consider three cases depending on $d_G$ at step $t$:
\begin{enumerate}
    \item Suppose $d_G \leq 1$ at step $t$. 
    For $G \in C^2_{t+1}$, it must be that $d_G = 2$ at step $t+1$.
    Hence, a non-$G$ neighbor of $v^*$ must flip to $G$ at step $t$ and therefore $E_{0,2}(c)+E_{1,2}(c)\leq\Delta$ in this case.
        
    \item Suppose $d_G = 2$ at step $t$.
    Let $u_1,u_2$ be the unique $G$ neighbors of $v^*$ at step $t$.
    We will show that in this case, one of at most $\Delta + \max\{2\Delta,\Delta + 2k,  2k,3k\} = 3k+\Delta$ clusters must flip for $G \in C^2_{t+1}$. 
    
    (i) Suppose $u_1$  is not $G$ or $u_2$ is not $G$ at step $t+1$.
    Then a neighbor of $v^*$ besides $u_1$ and $u_2$ must flip to $G$ at step $t$. 
    For each $u' \in N(v^*) \setminus \{u_1,u_2\}$, there is one cluster that contains $u'$ and flips it to $G$.
    Since at most $\Delta$ neighbors of $v^*$, one of at most $\Delta$ clusters must flip for $G \in C^2_{t+1}$ in this case. 
    
    (ii) Now suppose $u_1$ and $u_2$ are $G$ at step $t+1$. 
    We break this case into three subcases that depend on the clusters containing $u_1$ and $u_2$.
    Since $G \in C^0_{t}$, it follows that there is at least one $u \in \{u_1,u_2\}$ such that a vertex in its neighborhood of the second neighborhood must flip to or from a color in $\{R,B,G\}$.
    \begin{enumerate}
        \item Suppose $d_R(u) + d_B(u) = 0$ at step $t$.
        For $G \in C^2_{t+1}$, it must be that $d_R(u) + d_B(u) = 1$ at step $t+1$.
        Hence, a neighbor of this specific $u$ must flip to $R$ or $B$ at step $t$, and therefore $E_{0,2}(c)\leq \Delta + 2\Delta \leq 3\Delta$ in this case (where the $\Delta$ term is coming from case (i)).
        
        \item Suppose $d_R(u) + d_B(u) = 1$ at step $t$.
        Let $w$ be the unique $R$ or $B$ neighbor of $u$ at step $t$ that is not $v^*$.
        We will show that one of at most $\Delta + \max\{\Delta,k,2k\} = \Delta + 2k$ clusters must flip for $G^2_{t+1}$.
        
        Suppose $w$ is not $R$ or $B$ at step $t+1$.
        Then a different neighbor must flip to $R$ or $B$ at step~$t$.
        For each neighbor $w' \in N(u) \setminus \{v^*\}$, there are at most two clusters that contain $w'$ and flip it to $R$ or $B$. 
        Since there are at most $\Delta$ neighbors of $u$, one of at most $\Delta$ clusters must flip for $G \in C^2_{t+1}$ in this case.
    
        Now suppose $w$ is $R$ or $B$ at step $t+1$.
        We consider three cases for $d_G(w)$:
        \begin{enumerate}
            \item Suppose $d_G(w) = 0$ at step $t$. 
            For $G \in C^2_{t+1}$, it must be that $d_G(w) = 1$ at step $t+1$.
            Hence, a non-$G$ neighbor of $w$ must flip to $G$ at step $t$.
            There are $\leq\Delta$ neighbors of $w$.
            For each neighbor $z \in N(w)$, one cluster contains $z$ and flips it to $G$.
            Thus, in this case, one of at most $\Delta$ clusters must flip for $G \in C^2_{t+1}$.

            \item Suppose $d_G(w) = 1$ at step $t$. 
            Let $z$ be the unique $G$ neighbor of $w$ at step $t$ that is not $u$.
        
            Suppose $z$ is not $G$ at step $t+1$.
            Then a different neighbor must flip to $G$ at step $t$.
            For each neighbor $z' \in N(w) \setminus \{u\}$, there are at most one clusters that contain $z'$ and flip it to $G$. 
            Since there are at most $\Delta$ neighbors of $w$, one of at most $\Delta$ clusters must flip for $G \in C^2_{t+1}$ in this case.
        
            Now suppose $z$ is $G$ at step $t+1$.
            Since $G \in C^0_{t}$, it must be that there exists a neighbor $x \in N(z) \setminus\{w,u,v^*\}$ such that $X_{t}(x) = Y_{t}(x) \in \{R,B\}$.
            For $G \in C^2_{t+1}$, $x$ must flip to a color that is not $R$ or not $B$. 
            Hence, there are at most $k-1$ clusters that contain $x$ and flip $x$ to not $R$ or not $B$. 
            Thus, one of at most $k$ clusters must flip for $G \in C^2_{t+1}$ in this case.
            
            \item Suppose $d_G(w) \geq 2$ at step $t$. 
            For $G \in C^2_{t+1}$, there must be a $G$ neighbor of $w$ that flips to not $G$ at step $t$.
            Fix any $z_1, z_2 \in N(w)$ distinct $G$ neighbors of $w$ at step $t$.
            It follows that $z_1$ or $z_2$ must flip to not $G$. 
            For each $z_i \in \{z_1,z_2\}$, there are at most $k-1$ clusters that contain~$z_i$ and flip it to a color that is not $G$.
            Thus, one of at most $2k$ clusters must flip for $G \in C^2_{t+1}$ in this case.
        \end{enumerate}
        
        \item Suppose $d_R(u) = 2$  or $d_B(u) = 2$ at step $t$.
        Let $w_1$ and $w_2$ be the $R$ or $B$ neighbors of $u$ at step $t$ that is not $v^*$.
    
        Suppose $w_1$ or $w_2$ is not $R$ or $B$ at step $t+1$.
        Then a different neighbor must flip to $R$ or $B$ at step $t$.
        For each neighbor $w' \in N(u) \setminus \{v^*\}$, there are at most two clusters that contain $w'$ and flip it to $R$ or $B$. 
        Since there are at most $\Delta$ neighbors of $u$, one of at most $\Delta$ clusters must flip for $G \in C^2_{t+1}$ in this case.
    
        Now suppose $w_1$ and $w_2$ are $R$ or $B$ at step $t+1$.
        Since $G \in C^0_{t}$, it must be that there exists a neighbor $z \in N(w_1) \cup N(w_2) \setminus\{u,v^*\}$ such that $X_{t}(z) = Y_{t}(z) \in \{R,B\}$.
        If $G \in C^2_{t+1}$, $z$ must flip to a color that is not in $\{R,B\}$. 
        Hence, there are at most $k-1$ clusters that contain $z$ and flip $z$ to not $R$ or $B$. 
        Thus, one of at most $k$ clusters must flip for $G \in C^2_{t+1}$ in this case.

        \item Suppose $d_R(u) = 1$  and $d_B(u) = 1$ at step $t+1$.
        For $G \in C^2_{t+1}$, a $R$ or $B$ neighbor of $u$ (that is not $v^*$) must flip to not $R$ or $B$ at step $t$.
        Fix any $w_r, w_b \in N(u) \setminus \{v^*\}$ distinct $R$ or $B$ neighbors respectively of $u$ at step $t$.
        It follows that $w_r$ or $w_b$ must flip to not $R$ or $B$ respectively. 
        For each $w_i \in \{w_r,w_b\}$, $k-1$ clusters contain $w_i$ and flip it to a color that is not $R$ or $B$ respectively.
        Thus, one of at most $2k$ clusters must flip for $G \in C^2_{t+1}$ in this case.
        
        \item Suppose $d_R(u) + d_B(u) \geq 3$ a step $t$.
        For $G \in C^2_{t+1}$, a $R$ or $B$ neighbor of $u$ (that is not $v^*$) must flip to not $R$ or $B$ at step $t$.
        Fix any $w_1, w_2, w_3 \in N(u) \setminus \{v^*\}$ distinct $R$ or $B$ neighbors of $u$ at step $t$.
        It follows that $w_1$, $w_2$, or $w_3$ must flip to not $R$ or $B$. 
        For each $w_i \in \{w_1,w_2,w_3\}$, there are at most $k-2$ clusters that contain $w_i$ and flip it to a color that is not $R$ or $B$.
        Thus, one of at most $3k$ clusters must flip for $G \in C^2_{t+1}$ in this case.
    \end{enumerate}
     
    \item Suppose $d_G \geq 3$ at step $t$. 
    Then $\geq d_G - 2$ neighbors of $v^*$ recolor from $G$ to a different color at time $t$.
    Fix any $u_1, u_2,u_3 \in N(v^*)$ distinct $G$ neighbors of $v^*$ at step $t$.  Each $u_i\in \{u_1,u_2,u_3\}$ has $\leq k$ clusters whose flip recolors $u_i$, and hence  $E_{0,2}(c)\leq 3k$ in this case.
\end{enumerate}

Again, we finish by considering the case when $v^*$ changes (i.e., $X_{t+1}(v^*) \neq R$ or $Y_{t+1}(v^*) \neq B$).
For $G \in C^2_{t+1}$ then $X_{t+1}(v^*) \neq Y_{t+1}(v^*)$.
It follows that in exactly one chain, a cluster containing $v^*$ and one of its neighbors must flip.
That is, $v^*$ must flip to the color of a neighbor in one chain. 
Since $v^*$ has $\Delta$ neighbors, there are at most $\Delta$ clusters that flip $v^*$ to a color in its neighborhood in one chain. 
Thus, one of at most $2\Delta$ clusters must flip for $G \in C^2_{t+1}$ in this case.

Finally, we want to upper bound $E_{1,2}(c)$.
Suppose $G \in C^1_{t}$.
Then there exists a unique $u \in N(v^*)$ such that $X_t(u) = Y_t(u) = G$. 
Moreover, there exists a unique $w \in N(u) \setminus \{u,v^*\}$ such that $X_t(w) = Y_t(w)$ is $R$ or $B$.
Suppose without loss of generality that $X_t(w) = Y_t(w)= R$.
We will show that for $G \in C^2_{t+1}$ either (i) one of $\Delta$ clusters of size $\geq2$ must flip or (ii) one of $\Delta$ clusters of size $1$ must flip and one of at most $4\Delta$ clusters must flip.
Hence, we will show the following:
\begin{equation}
    \label{eqn:lastone}
E_{1,2}(c)\leq P_2\Delta\alpha +  4\Delta\alpha^2 \leq \alpha \Delta/2
\end{equation}
where the last inequality holds because $P_2 = 185/616$ and $4\alpha
\Delta = \eps^*\Delta/(1250k) < 1/616$ since $k>\Delta$. 

It remains to prove \cref{eqn:lastone}.  Since $G \in C^2_{t+1}$, $d_G = 2$ at step $t+1$ and it follows that at least one vertex $u' \notin N(v^*)\setminus \{u\}$ must flip to $G$. 
At most, $\Delta$ neighbors of $v^*$ are not $u$; thus, at most, $\Delta$ clusters could flip a neighbor of $v^*$ to $G$, this is case (i).

Suppose a cluster of size $1$ flips at step $t$ and changes a neighbor $u' \in N(v^*) \setminus \{u\}$ to $G$. 
If $u$ is not $G$ at step $t+1$ or if there exists a $R,G$ path between $u$ and $u'$ that doesn't go through $v^*$, then another neighbor of $v^*$ must flip to $G$ at step $t$. 
At most, $\Delta$ clusters flip a neighbor of $v^*$ to $G$.
If neither of those two events occurs, then a neighbor of $u$ must change $R$ or $B$, or a neighbor of $w$ that is not a $u$ or $u'$ must flip to $G$. 
In all cases, at least one of $\Delta + 2\Delta + \Delta = 4\Delta$ clusters must flip for $G \in C^2_{t+1}$.  This completes the proof of \cref{eqn:lastone} and thus the proof of the lemma.
\end{proof}

\subsection{Combining~$\nabla_H, \nabla_B$ to Prove~\cref{lem:whamcontractlocal}}
\begin{proof}[Proof of \cref{lem:whamcontractlocal}]
Recall, our goal is to show that there exists $\beta>0$ such that
\[
\ExpCond{\wHamming(X_{t+1}, Y_{t+1})}{X_t, Y_t} \leq (1 - \beta) \wHamming(X_t, Y_t).
\]
Recall $\eps^* = 10^{-7}$, $\alpha = \varepsilon^*/5000k$, and $\delta = \frac{11}{6} - \frac{161}{88} = \frac{1}{264}$. 
We define $\eta = \delta \Delta / (34 k)$.
Note that if $ k \geq \Delta$ then $0 < \eta < 1/2$  as needed.

First, we bound $\nabla_B$ by combining \cref{lem:barnablocalneg,lem:barnablocalpos}. 
From \cref{lem:barnablocalneg} we get the following bound:
\begin{align}
    \frac{\eta}{ \Delta} \sum_{c\in C} \left[E_{1,0}(c) + 2E_{2,0}(c) \right] 
    &\geq \frac{\eta}{\Delta} \left(\alpha(k-\Delta-2)\gamma(v^*)(1 - 20k\alpha)^3 \right) &&\mbox{(by \cref{lem:barnablocalneg})} \nonumber\\ 
    &\geq \alpha \eta \frac{99}{100} \left(\frac{k}{\Delta}-1 - \frac{2}{\Delta}\right) \gamma(v^*)  &&\mbox{($(1 - 20k\alpha)^3 \geq 99/100$ for $\eps^*\leq 1/10$)} \nonumber \\  
    &\geq \alpha \eta \frac{99}{100}\left(\frac{k}{\Delta}- \frac{3}{2}\right) \gamma(v^*) &&\mbox{($\Delta \geq 4$)} \nonumber\\ 
    &\geq \alpha \delta\frac{99 \Delta}{3400 k} \left(\frac{k}{\Delta}- \frac{3}{2}\right) \gamma(v^*)  &&\mbox{($\eta = \delta \Delta/(34k)$)} \nonumber  \\
    &\geq \alpha \delta\frac{99 \Delta}{3400 k} \left(\frac{3}{2} + \frac{32}{99} - \frac{3}{2} \right) \gamma(v^*)  &&\mbox{($k/\Delta \geq \frac{11}{6} - \eps^* \geq \frac{3}{2} + \frac{32}{99}$)} \nonumber  \\
    &\geq \alpha \delta \frac{32 \Delta}{3400 k} \gamma(v^*). \label{lem:barnneg}
\end{align}
Likewise, from \cref{lem:barnablocalpos} we get
\begin{align}
    \frac{\eta}{ \Delta} \sum_{c\in C} \left[E_{0,1}(c) + 2E_{0,2}(c) + E_{1,2} \right] &\leq \frac{\eta}{\Delta} \left(17\alpha k(1-\gamma(v^*)) + \alpha ({31}/{100}) \Delta \gamma(v^*)\right)  \nonumber \\
    &\leq \alpha \delta \Delta \left( \frac{1}{2} (1- \gamma(v^*)) + \frac{31\Delta }{3400 k} \gamma(v^*) \right).
    \label{lem:barnpos}
\end{align}

Plugging \cref{lem:barnneg,lem:barnpos} into \cref{eq:barnablocalneg} we get
\begin{align}
    \nabla_B & \geq \frac{\eta}{\Delta}\sum_{c\in C} \left[E_{1,0}(c) + 2E_{2,0}(c) - E_{1,2}(c)\right]  - \frac{\eta}{\Delta} \sum_{c \in C} \left[E_{0,1}(c) + 2E_{0,2}(c)\right] \nonumber
    \\
    &\geq \alpha \delta \Delta \left( \frac{32 \Delta}{3400 k} \gamma(v^*) -  \frac{1}{2} (1- \gamma(v^*)) - \frac{31\Delta }{3400 k} \gamma(v^*) \right)  \nonumber\\
    &= \alpha \delta \Delta \left( \frac{\Delta}{3400 k} \gamma(v^*) -  \frac{1}{2} (1- \gamma(v^*)) \right).  \label{eq:nab_bound}
\end{align}

Recall $\nabla_\mathcal{H} = \nabla_H - \nabla_B$. 
We can combine \cref{lem-improved:disagree-effect,eq:nab_bound} to get
the following:
\begin{align*}
    \nabla_\mathcal{H} &= \nabla_H - \nabla_{B} && \\
& = - \Prob{X_{t+1}(v^*)=Y_{t+1}(v^*)} 
+ \sum_{c:d_c(v^*)>0}\Exp{H_c(X_{t+1},Y_{t+1})} &&
\\ 
\nonumber
& \qquad  + \sum_{T:\distflip(v^*,T)= 2} 
|T|\Prob{X_{t+1}(T)\neq Y_{t+1}(T)} - \nabla_{B}
\\
\nonumber
& \leq  - \alpha|A(v^*)|(1-10k\alpha)
 + \sum_{c \in C^0:\atop d_c(v^*)\geq 1} \alpha\left(\left(\frac{11}{6}-\delta\right)d_c(v^*)-1\right)(1+400k\alpha) &&
 \\
\nonumber
& \qquad
+ \sum_{c \in C^1\cup C^2} \alpha\left(\frac{11}{6}d_c(v^*)-1\right)(1+400k\alpha) + \ 300\Delta^2\alpha^2
  - \alpha \delta \Delta \left( \frac{\Delta}{3400 k} \gamma(v^*) -  \frac{1}{2} (1- \gamma(v^*)) \right)
 \\
& \leq  - \alpha(k - (11/6)\Delta + \delta\Delta(1-\gamma(v^*))) 
+ 1110k^2\alpha^2 
- \alpha \delta \Delta \left( \frac{\Delta}{3400 k} \gamma(v^*) -  \frac{1}{2} (1- \gamma(v^*)) \right)
\\ 
& \leq 
\alpha \eps^*\Delta + 1110k^2\alpha^2
- \alpha \delta \Delta \left( \frac{\Delta}{3400 k} \gamma(v^*) +  \frac{1}{2} (1- \gamma(v^*)) \right)  
\hspace{+.5in} \mbox{(since $k>(11/6-\eps^*)\Delta$)}
\\
& \leq 
\alpha \Delta \left( \eps^* + 2220 \alpha k
- \frac{\delta }{6800}  \right) \hspace{+2.25in} 
 \mbox{(since $k < 2\Delta$ and $\gamma\leq 1$)} \\
& \leq 
\alpha \Delta \left( 2 \eps^*
- \frac{\delta}{6800}  \right) \hspace{+2.8in} 
 \mbox{(since $\eps^* = 5000\alpha k$)} \\
& \leq 
-\alpha \eps^* \Delta
\end{align*}
where the last inequality follows from $\delta = 1/264 \geq 13600 \eps^*$ since $\eps^* = 10^{-7}$.
Hence, if $k \geq (11/6 - \eps^*)\Delta$ then for $\eps^*=10^{-7}$ we have $\nabla_\mathcal{H} \leq -\alpha \Delta \eps^* = -(\eps^*)^2\Delta/(5000k) \leq -10^{-18}$ since we can assume $k<2\Delta$ (the case $k>2\Delta$ follows from \cref{thm:main-weaker} which was established in \cref{sec:weaker,sec:coupling}).
Finally, applying the path coupling~\cref{thm:path-coupling} as in the proof of \cref{thm:main-weaker} in \cref{sec:weaker} completes the proof.
\end{proof}

\bibliography{references.bib}
\end{document}